\algnewcommand\INPUT{\item[\textbf{Input:}]}%
\algnewcommand\OUTPUT{\item[\textbf{Output:}]}%
\newtheorem{lemma}{Lemma}
\newtheorem{theorem}{Theorem}
\newcommand{\pw}[1]{{\color{black}#1}}
\newcommand{\pwtwo}[1]{{#1}}
\newcommand{\Expect}[1]{{\ensuremath{\mathbb E}[#1]}}
\newcommand{\vQ}{\mathbf P}
\newcommand{\vP}{\mathbf Q}
\newcommand{\df}{\ensuremath{\lam}}         % density function
\newcommand{\gP}{\ensuremath{\vP}}          % ground position vector
\newcommand{\gp}{\ensuremath{\vp}}          % ground position
\newcommand{\fH}{\ensuremath{\vh}}          % flight height vector
\newcommand{\bP}{\ensuremath{\vp}}          % ground position vector
\newcommand{\bH}{\ensuremath{\vh}}          % flight height vector
\newcommand{\Pbar}{\ensuremath{\bar{P}}}         % SNR vectorf
\newcommand{\Rb}{\ensuremath{R_b}}         % Bit-Data rate (bits/seconds)
\newcommand{\Vor}{\ensuremath{\mathcal{V}}}         % Voronoi regions and Voronoi regions
\newcommand{\Rset}{\ensuremath{\mathcal R}}
\newcommand{\Qset}{\ensuremath{\mathcal P}}
\newcommand{\Dis}{\ensuremath{P}}                    %  Distortion
\newcommand{\AvDis}{\ensuremath{\bar{P}}}         % Average Transmit-Power (Distortion)
\newcommand{\AvPtx}{\ensuremath{\bar{P}}}         % Average Distortion
\newcommand{\Ptxn}{\ensuremath{P_{n}}}         % Transmit Power to nth UAV
\newcommand{\Ptx}{\ensuremath{P_{\text{TX}}}}         % Average Distortion
\newcommand{\Prxn}{\ensuremath{P_{\text{RX},n}}}         % Average Distortion
\newcommand{\inPtxn}{\ensuremath{P_{\text{TX},n}}}         % Average Distortion
\newcommand{\Grxn}{\ensuremath{G_{\text{RX},n}}}         % Average Distortion
\newcommand{\Gtx}{\ensuremath{G_{\text{TX}}}}         % Average Distortion
\newcommand{\GHPBW}{\ensuremath{G_{\text{HPBW}}}}         % SNR
\newcommand{\Ghpbw}{\ensuremath{G_{\text{HPBW}}}}         % SNR
\newcommand{\thtHPBW}{\ensuremath{\theta_{\text{HPBW}}}}         % SNR
\newcommand{\ththpbw}{\ensuremath{\theta_{\text{HPBW}}}}         % SNR
\newcommand{\hmin}{\ensuremath{h_{\text{min}}}}         % SNR
\newcommand{\RhminN}{\ensuremath{\R^N_{h_{\text{min}}}}}         % SNR
\newcommand{\Plos}{\ensuremath{\P_{\text{LoS}}}}         % SNR
\newcommand{\hPlos}{\ensuremath{\hat{\bet}_{\text{LoS}}}}         % SNR
\newcommand{\betnlos}{\ensuremath{\bet_{\text{NLoS}}}}         % SNR
\newcommand{\kapnlos}{\ensuremath{\kap_{\text{NLoS}}}}         % SNR
\newcommand{\thtelevation}{\ensuremath{\tht_\text{E}}}         % Elevation angle
\newcommand{\vome}{\ensuremath{\boldsymbol{ \ome}}}
\renewcommand{\P}{\operatorname{Pr}}
\newcommand{\zero}{{\ensuremath{\mathbf 0}}}
\newcommand{\LRA}{\ensuremath{\Leftrightarrow} }
\newcommand{\N}{{\ensuremath{\mathbb N}}}
\newcommand{\Nplus}{\ensuremath{\N_+}} % positive integers (without zero)
\newcommand{\R}{{\ensuremath{\mathbb R}}}
\newcommand{\Norm}[1]{\ensuremath{ \left\|#1\right\| }}
\newcommand{\alp}{\ensuremath{\alpha}}
\newcommand{\bet}{\ensuremath{\beta}}
\newcommand{\Del}{\ensuremath{\Delta}}
\newcommand{\eps}{\ensuremath{\epsilon}}
\newcommand{\gam}{\ensuremath{\gamma}}
\newcommand{\lam}{\ensuremath{\lambda}}
\newcommand{\kap}{\ensuremath{\kappa}}
\newcommand{\Ome}{\ensuremath{\Omega}}
\newcommand{\ome}{\ensuremath{\omega}}
\newcommand{\sig}{\ensuremath{\sigma}}
\newcommand{\tht}{\ensuremath{\theta}}
\newcommand{\vc}{{\ensuremath{\mathbf c}}}
\newcommand{\vh}{{\ensuremath{\mathbf h}}}                         % Vektorwertige Funktion
\newcommand{\vp}{{\ensuremath{\mathbf p}}}
\newcommand{\vq}{{\ensuremath{\mathbf q}}}
\newcommand{\vx}{{\ensuremath{\mathbf x}}}
\newcommand{\thmref}[1]{Theorem~\ref{#1}}     % Theorem
\newcommand{\lemref}[1]{Lemma~\ref{#1}}       % Lemma
\newcommand{\appref}[1]{Appendix~\ref{#1}}
\newcommand{\secref}[1]{Section~\ref{#1}}
\newcommand{\figref}[1]{Fig.~\ref{#1}}
\newcommand{\skprod}[2]{\ensuremath{ \left\langle #1,#2 \right\rangle }}
\newcommand{\noi}{\noindent}
\newcommand{\tM}{\ensuremath{\tilde{M}}}
\newcommand{\tc}{\ensuremath{\tilde{c}}}
\newenvironment{remark}{\par\vspace{1.5ex}\noindent{\em Remark\/}.}{\par\vspace{1.5ex}}
\newcounter{example}[section]
\newenvironment{example}[1][]{\refstepcounter{example}\par\vspace{1.5ex}\noindent{\em Example. #1}}{\par\vspace{1.5ex}}
\DeclareMathOperator{\Clos}{Clos}
  \newcommand{\set}[2]{\ensuremath{%
  \setbox0=\hbox{\ensuremath{#2}}
  \dimen@\ht0
  \advance\dimen@ by \dp0
  \left\{\left.#1\rule[-\dp0]{0pt}{\dimen@}\;\right|\;#2\right\} }}
\def\blfootnote{\xdef\@thefnmark{}\@footnotetext}
\begin{document}
% paper title
\title{Optimal deployments of UAVs  with directional antennas for a power-efficient coverage}

\author{Jun~Guo, Philipp~Walk, and Hamid~Jafarkhani\\
  {\small\begin{minipage}{\linewidth}\begin{center}
    \begin{tabular}{c}
    Center for Pervasive Communications and Computing \\
    University of California, Irvine, CA 92697-2625 \\
    {\it\{guoj4,pwalk,hamidj\}@uci.edu}\\
    \end{tabular}
    \end{center}
  \end{minipage}\vspace{0.2cm}}
}

\maketitle

\begin{abstract}\blfootnote{This work was supported in part by the NSF Award CCF-1815339. Part of the work was presented
    in Data Compression Conference \cite{GWJ19}.}%
 To provide a reliable wireless uplink for users in a given ground area, one can deploy Unmanned Aerial
 Vehicles (UAVs) as base stations (BSs). In another application, one can use UAVs to collect data from sensors  on the
 ground. For a power-efficient and scalable deployment of such flying BSs, directional antennas can be utilized to
 efficiently cover arbitrary 2-D ground areas. We consider a large-scale wireless path-loss model with a realistic angle-dependent radiation pattern for the directional antennas. Based on such a model, we determine the optimal 3-D deployment of  $N$ UAVs to
 minimize the average transmit-power consumption of the users in a given target area. The users are assumed to have
 identical transmitters with ideal omnidirectional antennas and the UAVs have identical directional antennas with given half-power beamwidth (HPBW) and symmetric radiation pattern along the vertical axis.
 For uniformly distributed ground users,
 %by simulations and analytical results,
 we show that the UAVs have to share a common flight height in an optimal power-efficient
 deployment.
 We also derive in closed-form the asymptotic optimal common flight height of $N$ UAVs in terms of the area size, data-rate, bandwidth,  HPBW, and path-loss exponent.
 %Furthermore, we apply the common height result to design
 %a femtocell network, placed at the ceiling of large industry halls, which minimizes the total average-power for serving
 %users with a given downlink data-rate.

\vspace{-3ex}
\end{abstract}
\begin{IEEEkeywords}
Node deployment, UAVs, directional antennas, power optimization
\end{IEEEkeywords}
\section{Introduction}
Due to the decreasing production cost of Unmanned Aerial Vehicles (UAVs), wireless communication coverage for large
areas can be achieved efficiently and flexibly by using a network of UAVs equipped with wireless transceivers. These
UAVs can communicate to each other or to nearby stationary base stations and operate as a relay network for users on the
ground \cite{ZYS11,ZWZ19}.  To improve
wireless links to the users, such flying base stations (BSs) use directional antennas to concentrate the radiation
power to smaller cells on the ground. Hence, directional antennas reduce power consumption and interference with neighboring
cells \cite{BJL,MSF,HA,HSYR,MWMM}. It is common to assume that  the antenna pattern of a directional antenna is
an ideal beam and symmetric in the azimuth plane. In such a model, the radiation intensity is constant  for elevation angles
inside the beam, defined by its beamwidth, and zero or small outside  \cite{YPWS-B19,AE-KLY17,HSYR,MWMM}. Such an
approximation is sufficient for high-altitude UAVs covering small  ground cells, but not for low-altitude UAVs which serve larger cells.
 Moreover, the objective is to find the maximal cell-radius which guarantees a reliable downlink at a given data-rate (coverage). Using the Shannon capacity
formula, for a given bandwidth and noise power, this reduces to a minimal required receive power for each ground user (UE) \cite{AKL14}.
To cover a given target area at the ground, efficiently with $N$ identical UAVs, an optimal common flight height is determined. Because of the circular cell shapes, this approach, in general, does not result in
a full coverage of the target area. By focusing on an uplink coverage and a more realistic model, our approach is slightly different. UEs
can adjust their transmit powers to achieve a reliable uplink connection in a given range and at a given data-rate. Therefore, we consider a full coverage model by using a
transmit-power model which is continuous in the elevation angle and hence continuous in the UE positions. Assuming the UEs  are  distributed by a given density function and for a given uplink data-rate, the objective for an optimal UAV deployment is then to minimize the
average transmit-power over all UEs in the target area
\cite{GJcom18,KKSS18,GWJ19}. The target area can have any polygonal shape which can be fully covered by any number of UAVs.

 To achieve our goal, we use a  more realistic directional antenna pattern, which considers a continuous angle-dependent radiation gain.
Our recent conference paper introduced a similar concept for 2-D UAV deployments to cover 1-D ground areas
\cite{GWJ19}.
 The received UAV power depends on the line-of-sight (LoS) distance between the UAV and the corresponding ground user and the UAV's antenna gain at the corresponding
Angle of Arrival (AoA). As shown in \figref{fig:uavdirected}, the AoA $\tht$ is the arc-cosine of the ratio of the flight height and the LoS distance.
In this manuscript, we extend the model to  3-D deployments and  adjustable beamwidths.
We model the antenna gain by various cosine-powers of the radiation angle (AoA)  \cite{MLW15,Bal05a}.
To minimize the average transmit power of  ground UEs, by deploying $N$ UAVs, a continuous $N-$facility locational optimization problem has to be solved.
%We formulate such an optimization problem as a quantization problem for a given continuous user density.
 This problem has been  investigated for example in \cite{Erdem16,KJ17,KKSS18,GHHF08} by assuming a given common  UAV flight height and an ideal beam.
We investigate the UAV optimization problem over all possible ground locations  in the target area  and flight heights.
 This results in a  3-D  optimization problem with a parameterized power function.
Such a parameterized cost function can also be used to formulate heterogeneous sensor deployment problems, as for example investigated in \cite{GHHF08}.
In many applications, as in sensor or vehicle deployments, the optimal weights and parameters of the system are usually unknown, but adjustable.
Therefore, one wishes to optimize the deployment over all admissible parameter values \cite{ML}.
In this work, for a large number of UAVs, we derive the closed-form optimal deployments to serve users uniformly distributed in a given  2-D  target area.
For given arbitrary flight heights, the optimal regions (cells) are known to be generalized Voronoi (Mobius) regions, which can be non-convex and disconnected sets \cite{BWY07}.
 A deployment optimization over arbitrary heights constitutes a heterogeneous problem whose solution is not known in a closed-form \cite{KJ17}.
However, our numerical solutions show that asymptotically a common height is optimal.

A dual problem is a downlink scenario, which minimizes the UAVs' average transmit-power to cover UEs at a given average downlink-rate \cite{AE-KLY17}.
Our uplink UAV deployment solution is also optimal for the downlink problem.
%Since, we have derived the optimal deployment for uniform UE distributions in closed-form,  the optimal deployment can be easily found for some other cases as well.
%Over time, the changes can include the size and position of the target area or the number of available UAVs.
The contributions of the paper can be summarized as
\begin{itemize}
    \item We consider a more realistic directional antenna model that considers a continuous angle-dependent radiation gain.
    \item We investigate the optimal 3-D UAV deployment problem over all possible ground locations and flight heights to minimize the total average transmit-power.
    \item We show numerically that the global optimal deployment is asymptotically given by a hexagonal lattice of the UAV ground positions and
a unique common flight height.
\end{itemize}

The rest of the  paper is organized as follows: We introduce a realistic  and mathematically tractable wireless communication model for ground-users-to-UAVs with directional antennas in \secref{sec:model}.
We formulate and solve the optimal  3-D  UAV deployment problem over given arbitrary ground areas in \secref{sec:optimize1D}.
 In \secref{sec:simulations}, we provide iterative Lloyd-like algorithms to derive  UAV deployments for various parameters with uniform and non-uniform user distributions.
Furthermore, we provide simulation results to compare to other deployments derived in \cite{KKSS18,MWMM} and verify the asymptotic optimality of common height deployments.
 Finally, we provide conclusions in \secref{sec:conclusions}.

\paragraph{Notation}  We denote the first $N$ natural numbers, $\Nplus$, by $[N]=\{1,2,\dots,N\}$.  We
write real numbers in $\R$ by small letters and row vectors by bold letters. The Euclidean norm of $\vx$ is given by
$\Norm{\vx}=\sqrt{\sum_n x_n^2}$.
%Note that $\Ball\left(\vc,0\right)$ is an empty set.
We denote by $\Vor^c$ the complement of the set $\Vor\subset\R^d$. The real  numbers larger than some $a\geq0$ are denoted by
$\R_a$.

\section{System model}\label{sec:model}
We investigate the  3-D  deployment of $N$ UAVs positioned
in $\Ome\times\R_0$, operating as flying BSs to provide a wireless communication link to UEs
in a given  2-D  target region  $\Ome\subset\R^2$ on the ground.
Here, the $n$th UAV's position, $(\vp_n,h_n)$, is given by its ground position $\vp_n=(x_n,y_n)\in\Ome$ and its height $h_n\in\R_0$.
%, representing its variable distortion parameter.
% somehow repetetive
%For a given continuous UE distribution in $\Ome$, we
%want to minimize the average transmit power  over all GTs in $\Ome$  such that a reliable uplink to the UAVs at a given %data rate is guaranteed.
%The downlink can be treated similarly and will be studied in \secref{sec:downlink}.
%
The optimal UAV deployment is then defined by the minimum average transmit-power to provide an uplink connection for UEs, distributed by a continuous density function $\lam$ in $\Ome$.
Each UE selects the UAV which requires the smallest  transmit-power\footnote{We assume an orthogonal communication by using frequency or time separation (slotted protocols) with no inter-user interference .}. This results in a so called generalized Voronoi (user) region for each UAV and partitions  $\Ome$ into $N$ user regions.
Hence, the optimal average-power deployment problem of $N$ UAVs is similar to an $N-$point quantization problem, as defined in
\cite{GWJ19,Erdem16,GJ,GJcom18,GJ18,KJ17,ML,MLCS,KKSS18,SJH}.
 For homogeneous deployments, where the BSs are mounted on the ground or at a fixed common height, the Voronoi regions for a large number of BSs converge to the well-known hexagonal regions \cite {OBSC00}.
For heterogeneous BSs or different heights the optimal regions are unknown \cite{KJ17}.

In recent decades, UAVs with directional antennas have been widely studied in the literature to increase the efficiency of wireless links \cite{BJL,MSF,HA,KMR,HSYR,MWMM,YPWS-B19}.
 Usually, the antenna gain $G$ is approximated by a constant within a $3$dB beamwidth, half-power-beam-width  (HPBW) and by zero or a small value  outside the beamwidth, resulting in an ideal directional antenna
pattern
\begin{align}
  G (\tht,\phi)= \begin{cases} \GHPBW , &|\tht| \leq \thtHPBW/2\\ 0, &\text{else}\end{cases}\quad,\quad \tht\in[0,\pi], \phi\in[0,2\pi],\label{eq:GHPBW}
\end{align}
which is symmetric in the azimuth plane.
Such a definition  ignores the strong angle-dependent gain of directional antennas \cite{MLW15}, notably for low-altitude UAVs which serve large user regions.
Since, due to the flight zone restrictions of aircraft's, the maximal heights for UAVs are typically less than $1000$m, such an angle-dependent gain becomes crucial if a few UAVs need to cover large target areas. %, as we will demonstrate in this work.
As shown in \figref{fig:uavdirected}, to obtain a more realistic model, we consider an antenna gain that depends continuously on the actual  radiation angle (AoA) $\theta_n(\vome)\in[0,\frac{\pi}{2}]$ from the $n$th UAV at $(\vp_n,h_n)\in\Ome\times\R_0$ to a UE at $\vome\in\Ome$.
To capture the power falloff versus the  Line-of-Sight (LoS)  distance $d_n$ along with the random attenuation due to shadowing, we adopt the following model \cite[(2.51)]{Gol05} %
\setlength{\floatsep} {0pt plus 2pt minus 6pt}
\setlength{\textfloatsep} {12pt plus 2pt minus 6pt}
\begin{figure}[!htb]
\setlength\abovecaptionskip{0pt}
\centering
\includegraphics[width=6in]{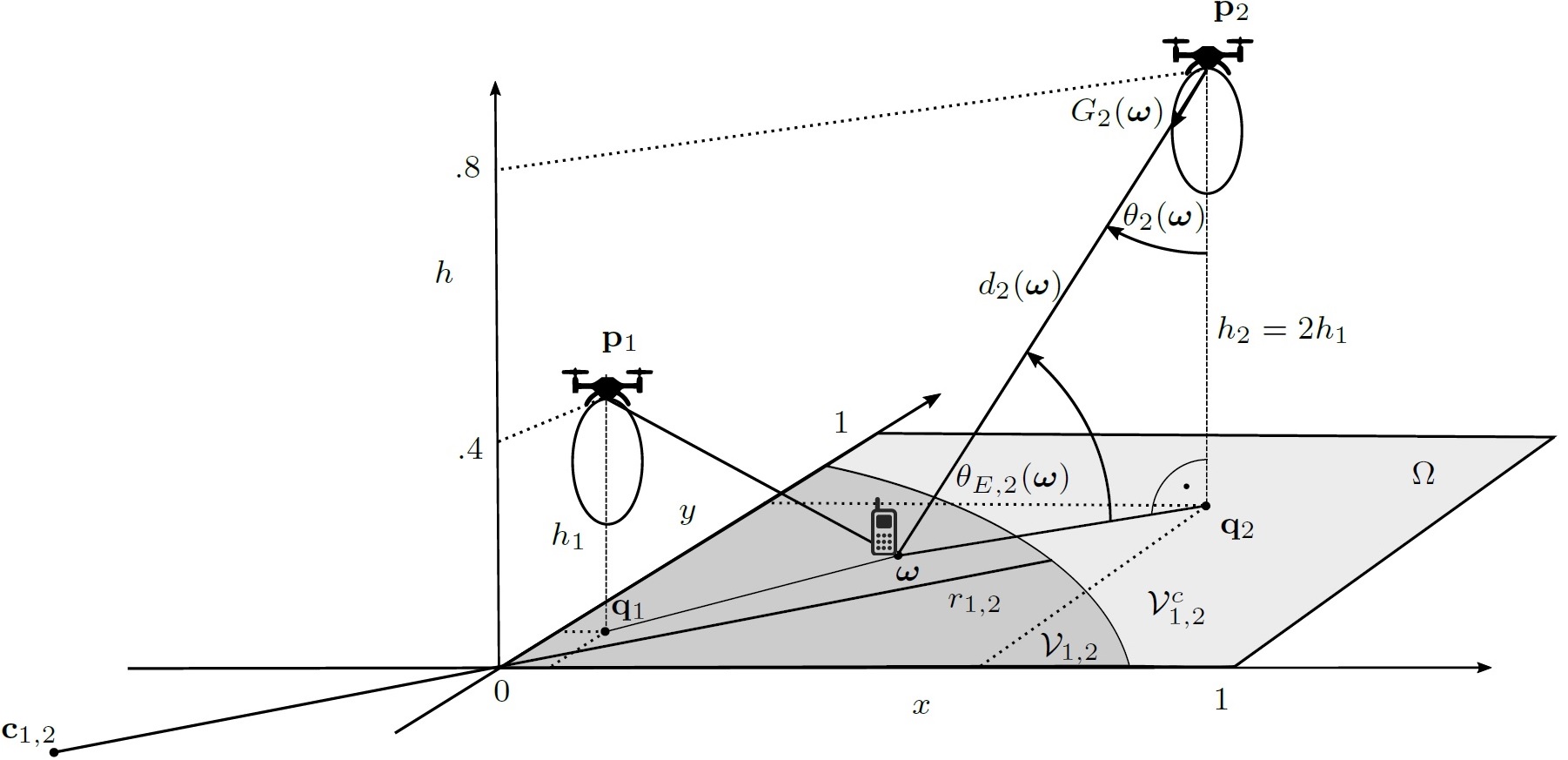}
\captionsetup{justification=justified}
\caption{\small{UAV deployment with directional antenna gains and associated UE cells with path-loss $\alp=2$, antenna
   parameter $\kappa=1$, and $N=2$ UAVs for a uniform UE
 distribution in $\Ome=[0,1]^2$.}}
    \label{fig:uavdirected}
\end{figure}
\begin{equation}
  PL_{dB}=10\log_{10}{K}-10\alpha\log_{10}(d_n/d_0)-\psi_{dB},\label{eq:pathlossmodel}
\end{equation}
where $K$ is a unit-less constant depending on the antenna characteristics and frequency, $d_0$ is a reference distance
to the actual distance $d_n>d_0$  at which an exponential path-loss needs to be considered, $\alpha\geq 1$ is the path-loss exponent, and $\psi_{dB}$ is a
Gaussian random variable following $\mathcal{N}\left(0,\sigma^2_{\psi_{dB}}\right)$ representing the random channel
attenuation (shadowing and non-LoS paths).
This Cellular-to-UAV or terrestrial  log-distance path-loss model is widely used and recommended by both 3GPP and ITU  \cite{ITU09,AG18}.
  Practical values of $\alp$ are between $1$ and $6$. The LoS distance of  UE at $\vome$ to the $n$th UAV at $(\vp_n,h_n)$ is
\begin{align}
 d_n(\vome)=\sqrt{\|\vp_n-\vome\|^2+h_n^2}=\sqrt{(x_n-x)^2+(y_n-y)^2+h_n^2}\label{eq:eucd}.
\end{align}
Common practical measurements of $\alp$ have been provided in \cite{AG18}.
% Let us note here, that we omitted for simplicity the aerial excess path-loss, which captures propagation for non-LoS paths.
%In rural areas with no big objects at the ground this is a suitable model and can be used as a reference for more urban areas \cite{ZYS11}.
%
With this model, the received power at the $n$th UAV from a UE at $\vome$ is given by  \cite{Gol05}
\begin{align}
    \Prxn(\vome)=\inPtxn(\vome)\beta_n(\vome)=\inPtxn(\vome) \Gtx \Grxn(\vome) Kd^{\alpha}_0 d_n^{-\alpha}(\vome)10^{-\frac{\psi_{dB}}{10}},\label{eq:rxPower}
\end{align}
where $\sqrt{\beta_n(\vome)}$ is the effective channel attenuation between the UE and the UAV.
To derive a realistic channel model, not only do we consider the LoS distance in $\bet_n(\vome)$, but also we take into account the corresponding elevation angle between the UE and the UAV.
%Therefore, we included in $\beta_n(\vome)$ the antenna gains of the UE and the UAV \eqref{eq:rxPower}.
For the UE, the dimensionless transmit antenna gain $\Gtx>0$ is assumed to model a perfect omnidirectional (isotropic) antenna, which is identical for all UEs.
The UAVs are equipped with identical directional receive antennas with gains
\begin{equation}
    \Grxn(\vome) = D_0(\kap)\cos^{\kappa}\left(\theta_n(\vome)\right) =D_0(\kap)\frac{h_n^\kap}{ d_n^\kap(\vome)}.
    \label{eq:Gdirected}
\end{equation}
These gains depend on the radiation  angle $\tht=\tht_n(\vome)$ and are symmetric along the vertical direction, i.e., independent of the azimuth angle $\phi$, as for example in horn or uniform linear array (ULA)\ antennas  \cite[Sec.2.6.1]{Bal05a}.
Compare to our conference paper \cite{GWJ19}, we have added an additional antenna parameter $\kappa\geq 1$ to the directional antenna gain which defines the \emph{maximal directivity} of the antenna
\begin{align}
  D_0(\kap)=\frac{4\pi}{\Ome_A(\kap)}\geq 1,\label{eq:maxdirect}
\end{align}
where $\Ome_A(\kap)$ denotes the \emph{beam solid angle} \cite[(2-23)]{Bal05a}.
For simplicity, in the antenna pattern $U_\kappa(\theta)=\cos^{\kappa}(\theta)$, we ignore the $l$ possible minor (side) lobes, which are usually modeled by $\cos(l\tht)$ for a more realistic antenna pattern \cite{MLW15}.
We can ignore the side lobes and especially the back lobes ($|\tht|>\pi/2$) since there is no significant reflection above and side-wards the UAVs when they fly at a reasonable flight height, as shown in   \figref{fig:uavdirected} and \figref{Pattern}.
 In fact, since we are only interested in a power averaged over all user positions in a cell, we essentially average the antenna pattern over all radiation (elevation) angles which is exactly what \eqref{eq:Gdirected} describes.
To account for the power concentration compared to an ideal  isotropic antenna with gain $G_0=1$ in each direction, we
normalize the  symmetric directional antenna gain \eqref{eq:Gdirected}  by the beam solid angle
\begin{align}
  \Ome_A(\kap)=  \int_{0}^{2\pi}\int_{0}^{\pwtwo{\pi}} U_\kap(\tht) \sin(\tht)d\tht d\phi
   = 2\pi\int_0^{\pi/2} \cos^\kap (\tht) \sin(\tht)d\tht= \frac{2\pi}{\kap+1}\quad,\quad \kap\geq1,
   \label{eq:beamsolidangle}
%B(\frac{\kap+1}{2}).
\end{align}
%
%where $B(\eps)=B(\eps,\eps)$ is Euler's integral of the first kind,
%
where the closed-form expression for the last integral is provided in \cite[(2.537.1)]{GR15}. Note that we assumed no back-lobe, i.e., $U_k(\tht)=0$ for $\pi\geq |\tht|\geq \pi/2$.
 For $\kappa=0$, we have an isotropic radiation pattern which results in a beam solid angle (no back reflector) $\Ome_A(0)=4\pi$ and
hence to the directivity $D_0(0)=1$.
The directivity of a directional antenna describes the overall power gain, compared to an isotropic antenna, in the direction of maximal gain ($\tht=0$).
The larger $\kappa$, the larger the directivity of the directional antenna, and the smaller the beam.
Then, large $\kappa$'s model antennas with small beamwidths and allow to focus (collect) the radiation power in a smaller area on the ground (cell), as shown in
\cite[Fig.~4]{MLW15} and \figref{Pattern}.
 A more insightful antenna parameter is given by the beamwidth $\ththpbw$.
The HPBW gain $\Ghpbw$ is by definition \cite{Bal05a} the angle $\ththpbw/2$ at which the gain is half of the maximal gain, i.e. the normalized pattern $U_{\kappa}(\ththpbw/2)\!=\!1/2$. %, see \figref{fig:patterns}.
Hence, the HPBW relates to $\kappa$ by
\begin{align}
    \ththpbw(\kap)=2\arccos(2^{-1/\kappa}).\label{eq:HPBW}
\end{align}
However, the HPBW only describes the solid angle, in which the gain is at least half-the maximal gain $U_{\kappa}(0)$.
The assumption that most of the radiated power will be radiated in this solid angle leads to the approximation in \eqref{eq:GHPBW}.
Not only does such an approximation  neglect the radiation outside the beamwidth, but also it ignores the fact that the continuous radiation pattern  monotonically decreases in $|\tht|$ over the range $[-\pi/2,\pi/2]$.
By using the gain in \eqref{eq:Gdirected}, we have a mathematically tractable model which respects such a continuous angle dependent radiation gain.
The combined antenna gain is then proportional to $G_{n}(\vome) =K \Gtx \Grxn(\vome)=K \Gtx D_0(\kap)\frac{h_n^\kap}{ d_n^\kap(\vome)}$. %, see \figref{fig:uavdirected}.

\begin{figure}
 \centering
  \vskip-3ex
\subfloat[]{%
    \includegraphics[scale=0.3]{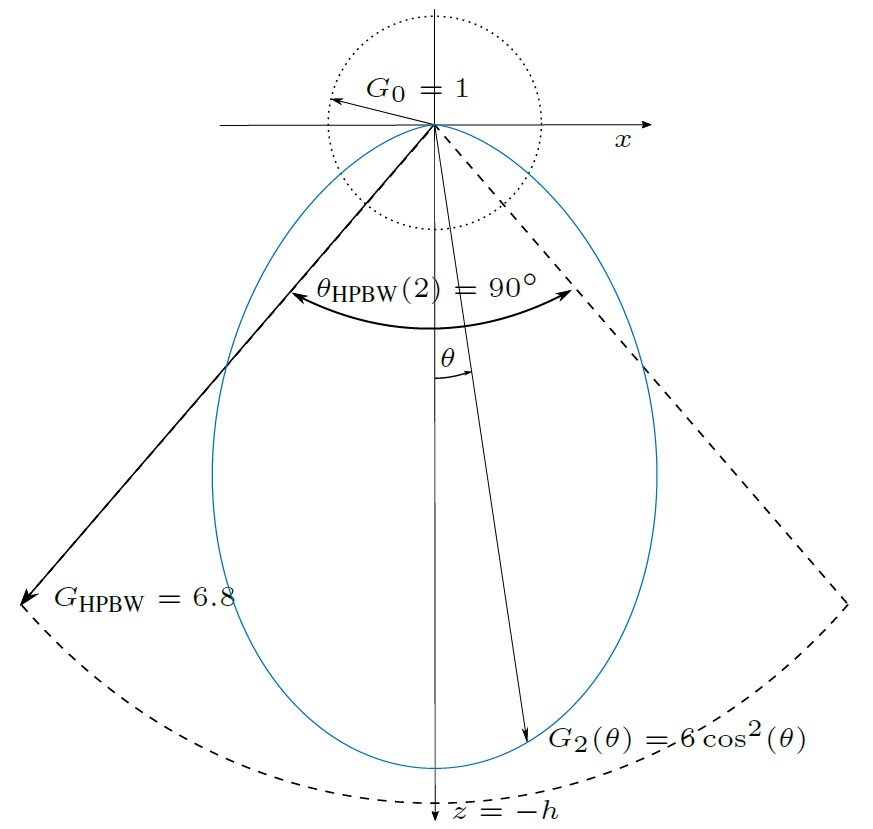}\label{Pattern}
    \label{Pattern}
}
\hspace{-0.16cm}
 \subfloat[]{ %\hspace{1ex}
   \includegraphics[scale=0.25]{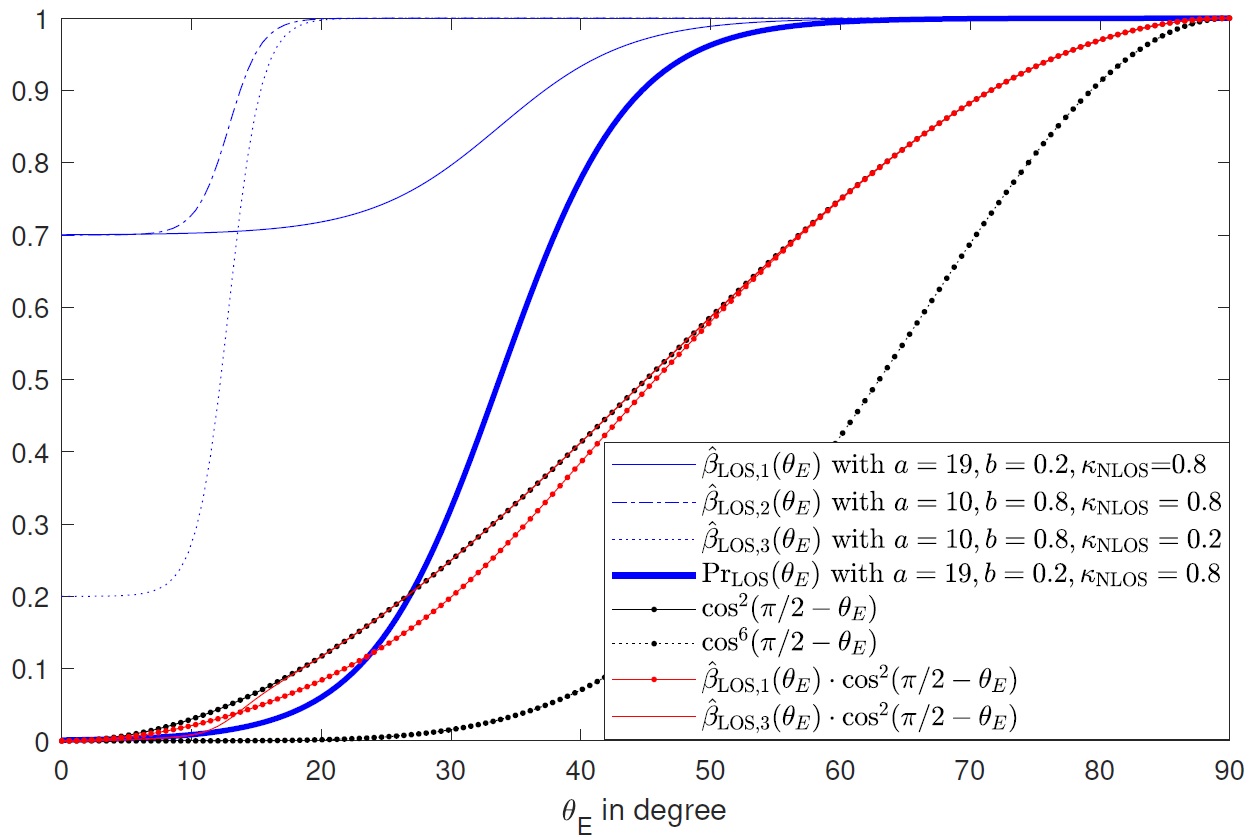}\label{fig:losvsnlos}}
  %\vskip1ex
  \caption{{\small \protect\subref{Pattern} Isotropic (dotted), directional with $\kappa=2$ (solid), and
  constant-beamwidth (dashed) antenna gain normalized by directivity \eqref{eq:maxdirect} \pwtwo{ in the elevation plane}. \protect\subref{fig:losvsnlos} shows the attenuation over the elevation angle $\thtelevation$ for some regularized LoS path-loss parameters versus the directional antenna pattern with $\kappa=2$.}} \label{fig:patterns}
\end{figure}

Accordingly, the transmit power with random attenuation $\psi_{dB}$ can be rewritten  as
\begin{equation}
% wron in TWC submisssion: D_0 was in the numerator
  \inPtxn(\vome)=\frac{\Prxn}{ h_n^\kap K\Gtx D_0(\kap) d^{\alpha}_0}d_n^{\alpha+\kap}(\vome)10^{\frac{\psi_{dB}}{10}}.
\end{equation}
The expectation over the random path-loss attenuation $\psi_{dB}$ yields the transmit power
\begin{align}
  \Ptxn(\vome)=\Expect{\inPtxn(\vome)}&\! =\!
  % wrong in TWC submisssion: D_0 was in the numerator
  \frac{\Prxn }{ h_n^\kap K\Gtx D_0(\kap) d^{\alpha}_0}
  \frac{d_n^{\alp+\kap}(\vome)}{\sqrt{2\pi}\sigma_{\psi_{dB}}} \int_{\R}
    \! \exp\!\left(\! \ln(10) \frac{\psi_{dB}}{10}  \!-\!\frac{\psi^2_{dB}}{2\sigma^2_{\psi_{dB}}}\!\right)\!d\psi_{dB}.
  %&=\frac{(2^{\frac{\Rb}{B}}-1)   N_0 \exp\big(-\frac{\sigma^2_{\psi_{dB}}(\ln{10})^2}{200}\big)d(\vq_n,(\vome,0))^{\alpha+1}}
  %{h_n \GGT Kd^{\alpha}_0 } \notag\\
     %\!=\!\frac{\bet}{h_n}  d_n^{2\gam}(\vome)
     \label{eq:expectedTX} %
\end{align}
We consider the communication between UE and UAV as reliable if the corresponding bit-rate is at least $\Rb$. Given a channel bandwidth $B$ and
 noise power  $N_0$, the Shannon formula suggests that  $\Rb=B\log_2\left(1+ \frac{\Prxn}{N_0}\right)$. Therefore,  the minimum required received power is  $P_0=(2^{R_b/B}-1)N_0$.
The minimum transmit power of UE to achieve a minimum received power of $P_0$ at the $n$th UAV is then given by%
\begin{align}
% wrong in TWC submisssion: D_0 was in the numerator
  \Ptxn(\vome)= \Ptx(\vp_n,h_n,\vome)=\frac{  1}{ \bet_0 }\cdot \frac{1}{D_0(\kap)} \cdot \frac{d_n^{\alp+\kap}(\vome)}{h_n^\kap},
  %\label{eq:dhkappaalpha}
  \label{eq:Eptx}
\end{align}
where the independent and fixed parameters are combined to
\begin{align}
  \bet_0(\alp)=\frac{ K\Gtx  d_0^{\alp}  }{P_0} \exp\left(\frac{\sig_{\psi_{dB}}^2 (\ln 10)^2}{200}\right)
  =\frac{ K\Gtx d_0^{\alp}  \sig_{\psi}^2}{(2^{\frac{\Rb}{B}}-1) N_0 },
  \label{eq:bet}
\end{align}
where $\sig_{\psi}^2$ is the (linear) average-power of the random channel attenuation.
%We will assume here that $K$ is constant over the whole bandwidth $B$.
%Since our goal is to minimize the average transmit power \eqref{eq:expectedTX} over all UE locations $\vome \in\Ome$, we define the transmit power for UE
%at $\vome$ to the $n$th UAV at $(\vp_n,h_n)$ for fixed $\gam$ and $\kap$ as
%
%\begin{align}
% \Ptxn(\vome)
% = \Ptx(\vome,\vp_n,h_n)= \frac{1}{\bet_0(\alp,\kap)}
% \cdot \left(h_n^{-\frac{\kap}{\gam}}\Norm{\vp_n-\ome}_2^2+ h_n^{2-\frac{\kap}{\gam}}\right)^\gam
%  \label{eq:Eptx}
%\end{align}
%
%where we used the relation $\gam=(\alp+\kap)/2\geq 1$.
The first factor in \eqref{eq:Eptx} describes the channel shadowing, noise-power, bandwidth/data-rate, and the antenna characteristics controlled by the path-loss exponent $\alp$. The second factor describes the power gain of the directional antenna with exponent $\kap$ compared to an isotropic antenna with exponent $\kap=0$. A larger $\kap$ results in a larger directivity and a smaller required transmit-power.
The last factor is the angle and distant dependent channel attenuation in $\bet_n(\vome)$ and is the novel part of our model. In this model, the directional antenna gain decreases fast with the radiation angle and punishes large radiation angles, i.e., UEs with a small elevation angle. The main goal of this work is to understand the optimal UAV deployment and optimal directional antenna beam for a given user area and number of UAVs.
For the validity of the path-loss model in \eqref{eq:pathlossmodel}, we need to ensure that $d_n>d_0$ for any UE position $\vome$, which requires a minimum flight height $\hmin>d_0$ for each UAV.
Such a minimum flight height can also be justified from a security point of view, to prevent collisions of the UAV with objects or people on the ground.
Furthermore, a very low UAV height can  result in high transmit-powers for far distant UEs, which might not be  admissible.
As can be seen from \eqref{eq:Eptx}, the transmit-power $\Ptx$ is a function of the parameter $h_n$ (UAV flight height) in addition to the ground distance between $\vp_n$ (UAV ground position) and $\vome$ (UE position).
Hence, a minimization of the average transmit-power for a full coverage of users in $\Ome$ results in a 3-D UAV deployment problem that is solved in the next section.
%From a quantization point of view, one can start with the distortion function in \eqref{eq:Eptx} without knowing the UAV power consumption formulas in this section.
%This is what we will do in the next section.
For simplicity, from now on, we set $\bet_0(\alp)D_0(\kap)=1$ since it does not affect the optimal deployment for fixed $\alp,\kap$.

So far, we have considered a large-scale fading channel model for Cellular-to-UAV links, in which we have included the angle-dependent directional antenna gain.
However, for a ground-user-to-UAV link in an urban area, we also need to consider  small-scale fading with non-LoS paths. Such a non-LoS path is due to the  blockage of objects on the ground, for example by  buildings, trees, or even moving vehicles \cite{MSBD16a,AE-KLY17,AG18,ZXZ19}.
A non-LoS propagation results in a higher path-loss and hence in an additional attenuation of some  $\betnlos\leq 1$.
The probability for a LoS propagation can be approximated in the elevation angle $\thtelevation=\pi/2-\tht$ (measured in radians)  by
\begin{align}
    \Plos(\thtelevation) = \frac{1}{1+a e^{-b(\frac{180}{\pi}\thtelevation -a)}},
\end{align}
for some parameters $b>0$ and $0<a<90$ \cite{AKL14,MWMM,ZWZ19}.
The probability of a LoS path is monotone increasing in the elevation angle and has an $S-$shaped curve, as shown by the thick blue curve in \figref{fig:losvsnlos}.
In more dense urban areas, the non-LoS paths are more likely, even at larger elevation angles, and the S-curve shifts to the right \cite[Fig. 2]{AKL14}.
A probabilistic mixing of LoS and NLoS attenuation results in a regularized LoS path-loss given by the attenuation factor \cite{ZXZ19,ZWZ19}
\begin{align}
   \hPlos(\thtelevation)= \Plos(\thtelevation)+(1-\Plos(\thtelevation))\betnlos = \frac{1+\betnlos a e^{-b(\frac{180}{\pi}\thtelevation-a)}}{1+a e^{-b(\frac{180}{\pi}\thtelevation -a)}}.
   \label{eq:regularizedLoS}
\end{align}
Multiplying  \eqref{eq:rxPower}  by \eqref{eq:regularizedLoS}  yields  $\hPlos(\pi/2-\tht) \cdot\cos^{\kap}(\tht)$ for the linear attenuation factors with omnidirectional transmit antennas.
However, since the directional antenna pattern $\cos^\kap(\theta)$, for some $\kap\geq 1$, decays fast to zero if the AoA $\tht$ approaches $\pi/2$, the antenna intensity dominates the attenuation gain for large $\tht$. As shown in \figref{fig:losvsnlos}, the antenna pattern fully absorbs the regularized LoS attenuation of the S-curve \eqref{eq:regularizedLoS} for large elevation angles. The result is only affected for very small elevation angles (large AoAs).
By choosing a proper minimum height $\hmin$ for a given area $A=|\Omega|$, large AoAs can be avoided, such that we can neglect the NLoS path effects.
Furthermore, we could also add an additional antenna exponent $\kapnlos>0$ to $\kap$ to increase the descent of the Cosine, which can approximate the regularized NLoS path in denser urban areas, as shown in  \figref{fig:losvsnlos} for $\kapnlos+\kap=4+2=6$.

\section{Optimal UAV Deployments} %and Parameter-Quantization Problems}
\label{sec:optimize1D}

The transmit power \eqref{eq:Eptx} defines, with $h_n$ and fixed $\alp,\kap\geq 1$, a
parameter-dependent power function for $\vp_n$. For a given UE density $\df$ in $\Omega$, UAV deployment
$(\vP,\vh)$ with ground positions $\vP=(\vp_1,\dots,\vp_N)$, heights $\vh=(h_1,\dots,h_N)$, and user regions (cells)
$\Rset=\{\Rset_1,\dots,\Rset_N\}$ with $\bigcup \Rset_n=\Ome$, the average transmit power $\AvDis$ of each UE in $\Ome$
%is the \emph{average distortion} of the  parameter-quantizer $(\vP,\vh,\Rset)$ \cite{GWJ19} given for $\gam=\frac{\alp+\kap}{2}\geq 1$ by
for $\gam=\frac{\alp+\kap}{2}\geq 1$ is given by
\begin{equation}
 \AvDis(\vP,\vh,\Rset)
 = \sum_{n=1}^N \int_{\Rset_n} P(\vome,\vp_n,h_n)\lam(\vome)d\vome \ \text{  with  }\
 P(\vome,\vp_n,h_n)=\frac{(\Norm{\vome-\vp_n}^2+h_n^2)^{\gam}}{h_n^\kap}.
  \label{eq:Pbar}
\end{equation}
Here, we assume that the UE at $\vome$ transmits with the smallest power $P$ to achieve a reliable link to the nearest UAV at $(\vp_n,h_n)$.
The $N$ regions, which minimize the average transmit power for given ground positions and heights
$(\vP,\vh)$, define a generalized Voronoi tessellation $\Vor=\{\Vor_n(\vP,\vh)\}$ of $\Ome$ by
\begin{align}
  \AvDis(\vP,\vh)
  :=\!\int_{\Omega}\min_{n\in[N]} \left\{ \Dis(\vome,\vp_n,h_n) \right\} \df(\vome)d\vome
  =\sum_{n=1}^{N}\int_{\Vor_n(\vP,\vh)}\!\!\!\!\!\Dis(\vome,\vp_n,h_n) \df(\vome)d\vome
%  =\min_{\substack{\{\Rset_n\}\subset\Omega\\ \bigcup \Rset_n=\Ome}}\AvDis(\vP,\bH,\Rset)
  \label{eq:optPbar},
\end{align}
where the \emph{generalized Voronoi regions} $\Vor_n(\vP,\vh)$ are defined as the set of sample points (user positions)
$\vome$ with smallest power to the $n$th ground position $\vp_n$ with parameter $h_n$ (UAV position).  Minimizing the
\emph{average transmit-power} $\AvDis(\vP,\vh,\Vor)$ over all UAV positions can be seen as an
\emph{$N-$facility locational-parameter optimization problem} \cite{GJ, GJcom18, GJ18,OBSC00}. According to the definition of the
Voronoi regions in \eqref{eq:optPbar}, we have
\begin{align}
  \Vor_n(\vP,\vh)=\set{\vome\in\Ome}{P(\vome,\vp_n,h_n)\leq P(\vome,\vp_m,h_m) \text{ for all } m\not=n}.
\end{align}
The \emph{minimum average transmit-power} over all possible %$N$ UAV
deployments is then given by
\begin{align}
 \AvDis^*= \AvDis(\vP^*,\vh^*)
  = \min_{(\vP,\vh)\in\Ome^N\times\R_+^N} \AvDis(\vP,\vh)
  = \min_{(\vP,\vh)\in\Ome^N\times\R_+^N} \min_{\Rset=\{\Rset_n\}\subset\Ome} \AvDis(\vP,\vh,\Rset).
\label{eq:optquanteqoptdeploy}
\end{align}
%
%which we call the \emph{$N-$level parameter optimized quantizer} for given distortion measure $D$, density $\lam$, and
%area $\Ome$.
%
%Although $h_n$ is a location parameter in space, the power is not a monotone function of the height, due to the
%directional antenna effect, rendering the problem to a possible non-convex optimization problem.
%
%
To find the local extrema of \eqref{eq:optPbar} analytically, we need the objective function $\AvDis$ to be continuously
differentiable at any point in $\Omega^N\times \R_0^N$, i.e., the gradient should exist and be a continuous function.
Such a property was shown to be true for piecewise continuous non-decreasing cost functions with Euclidean
metrics over $\Ome^N$ \cite[Thm.2.2]{CMB05} and weighted Euclidean metrics \cite{GJ}. Then, the necessary condition for a
local extremum is the vanishing of the gradient at a critical point\footnote{If $\nabla \Pbar$ is not
continuous in $\Qset^N$, then any jump-point is a potential critical point and has to be checked individually.}.
%
%To derive the partial derivatives we will need to rewrite the integral kernel, which is the minimum of $N$ continuous
%functions, as a sum of $N$ integrals by using generalized Voronoi tessellations of $\Ome$.
%
  \if0 We assume that the UEs  are placed in $\Omega$ according to a time-invariant density (probability) function Thus,
  the total UAV transmit power to serve all UEs or the total average UE transmit power to all UAVs can then be rewritten
  with \eqref{eq:eucd} and \eqref{eq:Eptx} as the \emph{distortion function}
  \begin{equation}
    D\left(\bP,\bH\right)=\bet\int_{\Omega}\min_{n} \left\{ f(\Norm{p_n-\ome}^2,h_n) \right\} \df(\omega)d\omega,
   \label{obj0}
  \end{equation}
  where the minimization is over the \emph{performance function}
  \begin{align}
   f \colon \R_0\times \R_+ &\to \R_0\\
             (\rho,h)&\mapsto f(\rho,h)=\frac{\left(\rho+h^2\right)^{\gam}}{h}.
  \end{align}
  given for a fixed $\gam\geq 1$. For our communication model we have $\gam=(1+\alp)/2$ with $\alp\geq 1$.

  Note, the performance function is continuous and differentiable at any order in $\R^2_0$, hence smooth. But $f$ is not
  convex in $(\rho,h)$ but convex in $\rho$.
  Here, the different heights lead to a heterogeneous network, if projected to the ground $\Ome$, see for example
  \cite{GJ16a}.  The minimization results in a tessellation of $\Omega$ in $N$ ground terminal cells $\Vor_n$, each
  associated to one UAV located at $q_n=(p_n,h_n)$.  We will show a precise formulation of the ground terminal cells,
  served by the $n$th UAV for any UAV deployment, as investigated in \cite{MSBD16b} for a simplified path-loss model. We
  will later find the optimal deployment to minimize the transmit power for guaranteeing a given communication data rate.
  \fi
In the next lemma, we derive the generalized Voronoi regions for sets $\Ome\subset\R^d$ with $d=1,2$ and for any height
$h_n\in\R_0$. The derived generalized Voronoi regions  which are special cases of \emph{M{\"o}bius diagrams (tessellations)},
introduced in \cite{BWY07}.
\begin{lemma}\label{lem:moebiusdia}
  Let $\vP=(\gp_1,\gp_2,\dots, \gp_N)\in \Omega^N\subset (\R^d)^N$ for $d\in\{1,2\}$ be the ground positions
  and $\fH\in\R_+^N$  the associated heights. For fixed parameters $\kap\geq 1$ and $\gam\geq
  \frac{1+\kap}{2}$ with uniform density $\df$ in $\Ome$,  the minimal average power over all
  possible $N$ regions is given by
  \begin{align}
    \AvDis\left(\vP,\fH\right)
    = \sum_{n=1}^{N} \int_{\Vor_n} \! \frac{ (\Norm{\gp_n- \vome}^2 +h_n^2)^{\gam}}{h_n^\kap} \df(\vome)d\vome,
       \label{eq:minimizationmoebius}
  \end{align}
  where the generalized Voronoi regions $\Vor_n= \Vor_{n}(\gP,\fH)= \bigcap_{m\not=n} \Vor_{nm}$ and
  the dominance regions of $n$ over $m$ is defined by
  \begin{align}
    \Vor_{nm}=\begin{cases}
         \set{\vome\in\Omega}{\|\gp_n-\vome\|\le \|\gp_m-\vome\|}&, h_m=h_n, \\
         \set{\vome\in\Omega}{\|\vome-\vc_{nm}\|\le r_{nm}} &, h_n<h_m, \\
         \set{\vome\in\Omega}{\|\vome-\vc_{nm}\|\ge r_{nm}} &, h_n>h_m,
        \end{cases}\label{eq:moebius}
  \end{align}
  where center $\vc_{nm}$ and radius $r_{nm}$ of the ball are given by
  \begin{align}
    \vc_{nm}\!=\!\frac{\gp_n - h_{nm}\gp_m}{1-h_{nm}}
    \quad\text{and}\quad
    r_{nm}\!=\!\left(\frac{h_{nm}}{\left(1-h_{nm}\right)^2}\Norm{\gp_n-\gp_m}^2  + h_n^2
    \frac{h_{nm}^{1-\frac{2\gam}{\kap}}
  -1}{1-h_{nm}}\right)^{\frac{1}{2}}.
  \label{eq:rnmcnm}
  \end{align}
  Here, we denoted the height ratio of the $n$th and $m$th UAV by $h_{nm}= \left(h_n/h_m\right)^{\frac{\kap}{\gam}}$.
\end{lemma}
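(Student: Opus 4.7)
The plan is to reduce the stated minimum over regions to a pointwise comparison of transmit powers, so that $\Vor_n(\vP,\fH)=\bigcap_{m\neq n}\Vor_{nm}$ where $\Vor_{nm}=\{\vome\in\Ome:P(\vome,\gp_n,h_n)\le P(\vome,\gp_m,h_m)\}$. This reduction does not use any property of $\df$ beyond positivity and immediately gives \eqref{eq:minimizationmoebius}, since the integrand inside $\min_n$ depends on $\vome$ only and not on the region boundary. The remaining work is therefore entirely about understanding the shape of each pairwise dominance set $\Vor_{nm}$.

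For that, I would start from the raw inequality and, exploiting that both sides of $P(\vome,\gp_n,h_n)\le P(\vome,\gp_m,h_m)$ are positive and $\gam\ge 1$, raise it to the power $1/\gam$ (a monotone transformation). Clearing the denominator $h_n^{\kap/\gam}$ and introducing $h_{nm}=(h_n/h_m)^{\kap/\gam}$ rewrites the dominance condition as
\begin{align*}
\|\vome-\gp_n\|^2+h_n^2 \;\le\; h_{nm}\bigl(\|\vome-\gp_m\|^2+h_m^2\bigr).
\end{align*}
Expanding the two squared norms collects a quadratic form in $\vome$ whose leading coefficient is $(1-h_{nm})\|\vome\|^2$, a linear term $-2\vome\cdot(\gp_n-h_{nm}\gp_m)$, and a constant depending on $\gp_n,\gp_m,h_n,h_m,h_{nm}$.

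The three cases in \eqref{eq:moebius} now follow by case analysis on the sign of $1-h_{nm}$. When $h_n=h_m$ we have $h_{nm}=1$, the quadratic part vanishes, and the remaining linear inequality is exactly $\|\gp_n-\vome\|\le\|\gp_m-\vome\|$, i.e., the classical half-plane. When $h_n<h_m$ we have $h_{nm}<1$ and $1-h_{nm}>0$, so dividing preserves the inequality and completing the square identifies $\vc_{nm}$ and $r_{nm}$ as in \eqref{eq:rnmcnm}; the region is the interior of a ball. When $h_n>h_m$ the divisor flips sign, the inequality reverses, but the same center and radius appear, giving the exterior of a ball.

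The main obstacle I anticipate is the cosmetic but bookkeeping-heavy verification that the constant term left after completing the square matches the stated radius \eqref{eq:rnmcnm}. A direct expansion produces
\begin{align*}
r_{nm}^2 \;=\; \frac{h_{nm}}{(1-h_{nm})^2}\,\|\gp_n-\gp_m\|^2 \;+\; \frac{h_{nm}h_m^2-h_n^2}{1-h_{nm}},
\end{align*}
and the identity $h_{nm}h_m^2=h_n^2\,h_{nm}^{\,1-2\gam/\kap}$, which is just the definition $h_{nm}=(h_n/h_m)^{\kap/\gam}$ rewritten, converts the second fraction into the form $h_n^2(h_{nm}^{1-2\gam/\kap}-1)/(1-h_{nm})$ appearing in \eqref{eq:rnmcnm}. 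The hypothesis $\gam\ge(1+\kap)/2$ is what finally guarantees $r_{nm}^2\ge 0$, since it makes $(h_{nm}^{1-2\gam/\kap}-1)/(1-h_{nm})$ non-negative for every admissible height ratio; without this one could not legitimately speak of a geometric ball, so checking the sign carefully is the one place where a condition on the exponents is actually invoked.
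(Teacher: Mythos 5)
Your proposal is correct and follows essentially the same route as the paper's proof: reduce to the pairwise comparison, apply the monotone $1/\gam$ power to obtain the additively weighted quadratic inequality (the paper's weights $a_n=h_n^{-\kap/\gam}$, $b_n=h_n^{2-\kap/\gam}$ are exactly what your normalization produces), complete the square with a case split on the sign of $1-h_{nm}$, and verify positivity of $r_{nm}^2$ via $1-2\gam/\kap<0$. The algebraic identity $h_{nm}h_m^2=h_n^2 h_{nm}^{1-2\gam/\kap}$ and the sign argument you give match the paper's computation.
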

%\begin{remark}
%    Note, if $r_{nm}=0$ and $h_n>h_m$ then the $n$th quantization point is not used.
%\end{remark}

\begin{proof}
  See \appref{app:moebiusregions}.
\end{proof}

\begin{remark}
  It is also possible that two UAV ground positions are the same, but have different flight heights. If the height
  ratio is very small or very large, one of the UAVs can become redundant, i.e.,  its  region is empty, as shown in the following example.
  In fact, if we optimize over all UAVs, such a case will be excluded.  We showed this fact for the optimal 2-D deployment
  in \cite[Lem.3]{GWJ18a}.
 % have the same ground position, i.e., $\gp_n=\gp_m$, then both of them are active, if the relative height
 % difference is not to large. In all other cases, each UAV will be active, as long as it is not to high or to low.
\end{remark}

\begin{example}
\figref{fig:uavdirected} plots the UE regions for a
uniform distribution in $\Omega=[0,1]^2$ and UAVs placed on $\vp_1=(0.1 , 0.2),\ h_1=0.5$ and $\vp_2=( 0.6 , 0.6),\ h_2=1$, with parameters $N=2$, $\kap=1$, and $\alp=2$.
If the second UAV reaches an altitude of $h_2\geq 2.3$, its Voronoi region $\Vor_{2}=\Vor_{2,1}$ will be empty and hence it becomes ``inactive''.
\end{example}
\vspace{-20pt}
\subsection{Necessary optimal conditions}
\vspace{-5pt}
To find the optimal deployment of $N$ UAVs, we have to minimize the average transmit power
\eqref{eq:Pbar} over all possible UAV positions  with minimal flight height $h_{min}$, i.e., we have to solve the following non-convex
\emph{$N-$facility locational-parameter optimization problem} \eqref{eq:optquanteqoptdeploy}
\vspace{-3pt}
\begin{align}
%  \min_{(\vP,\vh)\in\Ome^N\times\R_+^N}
  \AvDis(\vP^*,\vh^*)= \min_{\vP\in\Omega^N,\bH\in \RhminN} \sum_{n=1}^{N} \int_{\Vor_n(\vP,\bH)}
  h_n^{-\kap}(\Norm{\vp_n-\vome}^2 +h_n^2)^{\gam}\df(\vome)d\vome\label{eq:phoptlocal},
\end{align}
where $\Vor_n(\vP,\vh)$ are the M繹bius regions given in \eqref{eq:moebius} for each fixed $(\vP,\vh)$. The integral
kernel $f(\Norm{\vp_n-\vome}^2,h_n)=(\Norm{\vp_n-\vome}^2+h_n^2)^\gam/h_n^\kap$ is a non-decreasing positive function in
the Euclidean distance of $\vp_n$ and $\vome$ for fixed $h_n$, since $\gam\geq (1+\kap)/2\geq 1$.
A point $(\vP^*,\vh^*)$ with M繹bius diagram $\Vor^*=\Vor(\vP^*,\vh^*)=\{\Vor_1^*,\dots,\Vor^*_N\}$ is a critical point
of \eqref{eq:phoptlocal} if all horizontal partial derivatives of $\AvDis$  and
%(ii)
the vertical partial derivatives of $\AvDis$ are vanishing
%are either zeros when the height is over $h_{min}$ or non-negative when the height hits the lower-bound $h_{min}$}
, i.e., if for each $n\in[N]$ we have
\cite{GJ}
\vspace{-3pt}
\begin{align}
  &\nabla_{\vp_n} \AvDis\Big|_{\vp_n=\vp^*_n} = \frac{2\gam}{h_n^{*\kap}}\int_{\Vor^*_n} (\vp^*_{n}-\vome) (\Norm{\vp^*_n-\vome}^2
  +h_n^{*2})^{\gam-1} \df(\vome)d \vome=\zero,\label{eq:pnopt}
\end{align}
\begin{equation}
    \begin{aligned}
\nabla_{\!h_n} \AvDis \Big|_{h_n=h^*_n}
\!\!\!&=\!\frac{\kap}{h_n^{*\kap+1}}\!\!\int_{\Vor^*_n}\! \!\!\left(\!\frac{2\gam h_n^{*2}}{\kappa} ( \Norm{\vp_n^*\!-\!\vome}^2\!+\!h^{*2}_n)^{\gam-1} \!-\!(\Norm{\vp_n^*\!-\!\vome}^2\!+\!h^{*2}_n)^{ \gam}\!\right) \!\df(\vome)d\vome=0.
 % &=\begin{cases}
  %0, \quad \mbox{if } h_n > h_{min}\\
  %\delta, \quad \mbox{if } h_n = h_{min}
  %\end{cases}
  \label{eq:hopt}
\end{aligned}
\end{equation}
 If $h_n^*<\hmin$, then we set $h_n^*=\hmin$. In this case, the global optimal deployment is not admissible and $(\vP^*,\vh^*)$ becomes a local optimum. However, it is possible to achieve a global optimum, by adjusting the parameters $\alpha,\kappa,N,$ and $\Ome$ accordingly.
%where $\delta\geq0$ represents a non-negative number.
%
%where the last condition is equivalent to
%
%\begin{align}
%  \int_{\Vor^*_n}( \Norm{\vp_n^*-\vome}^2+h^{*2}_n)^\gam \df(\vome)d\vome= \frac{2\gam}{\kap} h_n^{*2} \int_{\Vor^*_n}
%  (\Norm{\vp_n^*-\vome}^2+h^{*2}_n)^{\gam-1} \df(\vome)d\vome.
%  \label{eq:criticalpoint}
%\end{align}
%\prooftrue
%
For $N=1$, the integral regions do not depend on $\vP$ or $\vh$ and since the integral kernel $f$ is continuously
differentiable and non-decreasing in $\Norm{\vp_n-\vome}^2$,  the partial derivatives only apply to the integral
kernel \cite{DFG99}. For $N>1$, the conservation-of-mass law \cite{CMB05} can be used to show that the
derivatives of the integral domains cancel each other, see \cite{GJ} for a detailed proof.
\begin{remark}
  The shape of the regions depend on the UAV  heights. If the height is different for each UAV (heterogeneous), some region boundaries will be spherical and not polyhedral. We show later that homogeneous (common) heights with polyhedral
  regions are the optimal regions.
\end{remark}

\subsection{Optimal common height in a $3$-D UAV deployment}\label{sec:twoD}
\newcommand{\Hexa}{\ensuremath{\mathcal{H}}}
\renewcommand{\vq}{\ensuremath{\mathbf{q}}}
\renewcommand{\vQ}{\ensuremath{\mathbf{Q}}}
\newcommand{\MDelepsH}{\ensuremath{M_{\Del}(\eps,H)}}
\newcommand{\MDeleps}{\ensuremath{M_{\Del}(\eps)}}
\newcommand{\MDel}{\ensuremath{M_{\Del}}}
\newcommand{\MtHexa}{\ensuremath{\tM_{\Hexa}}}
\newcommand{\MHexa}{\ensuremath{M_{\Hexa}}}

%We will in this section extend the optimal deployment for two dimensional grounds $\Ome\subset\R^2$.

We showed in the conference paper version of this work \cite{GWJ19} that UAVs in an optimal 2-D deployment achieve a common flight height in the asymptotic limit ($N\to\infty$). Therefore, let us  assume a common height for all UAVs in the 3-D deployment.
 We will show in \secref{sec:simulations} by simulations that the optimal deployment indeed converges to a common height deployment for large $N$.
  For any common height, \lemref{lem:moebiusdia} shows that the ground regions are polyhedral, since the cost function is homogeneous.
Hence, in the asymptotic limit, the optimal ground positions result in congruent hexagonal regions given a fixed common height  \cite{DFG99}.

In fact, we can show that the $3$-D UAV deployment problem with  a common height restriction and for large $N$ has only one local optimum
deployment, given by centroidal ground positions with hexagonal regions and an optimal common height $h^*$.
\begin{theorem}\label{thm:2D}
  Let $\kap\geq1, \gam\geq (1+\kap)/2$, and  $\Ome\subset \R^2$ be a set with area $\mu(\Ome)=A>0$.  For a
  uniform density $\lam$ over $\Ome$, the optimal deployment of $N$ UAVs, minimizing the average  transmit-power  in
  \eqref{eq:phoptlocal} under the restriction of a common height, with ground locations $\vQ^*=(\vq^*_1,\dots,\vq^*_N)$
  and common height $h^*$ is attained asymptotically ($N\to \infty$, high resolution case) by the hexagonal lattice,
  where each Voronoi region $\Vor^*_n$ is congruent to the hexagon $\Hexa$ and the optimal ground locations are the corresponding centroids. % $\vq_n^*$.
  Moreover, the global optimal common height is given by $h^*=h^*(\gam,\kap,H)$ for $H=A/N$.  For $\gam=1,2,3,$ we
  derive the optimal height asymptotically as % in closed-form asymptotically as
  \vspace{-2pt}
  \begin{align}
    h^*(\gam,\kap,H)\sim c(\gam,\kap)\sqrt{H}\quad \text{ for }\quad 1\leq\kap\leq2\gam-1,\label{eq:optcommonheight}
  \end{align}
  \vspace{-2pt}
  with scaling factors
  \begin{align}
    c(1) &=  \sqrt{\frac{5}{18\sqrt{3}}}, \quad
    c(2,\kap)= \sqrt{\frac{5}{18\sqrt{3}} \frac{\sqrt{(172-43\kap)\kap/125 +4}-(2-\kap)}{4-\kap}},\\
 c(3,\kap) &
 =\sqrt{\frac{5}{18\sqrt{3}} \frac{ (u(\kap)-v(\kap))^{\frac{1}{3}} + (u(\kap)+v(\kap))^{\frac{1}{3}} -(4-k) }
  {6-\kap}},
 \label{eq:optcomonheights}
 \end{align}
 where
 \vspace{-5pt}
 \begin{align}
   u(\kap) & =(143360 - 16728\kap -444 \kap^2 + 37\kap^3)/4375,\\
   v(\kap) & =\frac{12(6-\kap)}{125\cdot 35}\sqrt{\frac{3}{5}}
   \sqrt{ 6607552\!+\! 659680 \kap \!+\! 103387 \kap^2\! -\! 108408\kap^3 \!+\! 9034\kap^4},
 \end{align}
which achieves for $\bet_0=1$  and directivity $D_0(\kap)$ in \eqref{eq:maxdirect}  the minimal average transmit powers
 \begin{align}
    &\AvDis^*(1,\kap,H) \sim \frac{1}{D_0(1)}\sqrt{\frac{2}{9\sqrt{3}}}H^{\frac{1}{2}}, \quad
    \AvDis^*(2,\kap,H) \sim \frac{1}{D_0(\kap)}\!
    \left(\frac{14}{405c(2,\kap)} \!+ \!\frac{5c(2,\kap)}{9\sqrt{3}} \!+ \!c^3(2,\kap)\right)\! H^{\frac{3}{2}},\notag\\
 &\AvDis^*(3,\kap,H) \sim \frac{1}{D_0(\kap)}
  \left( \frac{83}{195\cdot 27 c(3,\kap)} + \frac{14c(3,\kap)}{135} + \frac{5c^3(3,\kap)}{9\sqrt{3}} +
  c^5(3,\kap)\right) H^{\frac{5}{2}}.\label{eq:optimalaveragepower}
  \end{align}
\end{theorem}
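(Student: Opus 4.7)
The plan is to reduce \thmref{thm:2D} to a classical asymptotic quantization problem and then carry out an explicit one-dimensional optimization in the common height. Under the restriction $h_n=h$ for all $n$, the kernel $(\norm{\vp_n-\vome}^2+h^2)^\gam/h^\kap$ is, with $h$ fixed, a strictly increasing convex function of the Euclidean distance $\norm{\vp_n-\vome}$ that is the same for every $n$, so the $h_n=h_m$ case of \lemref{lem:moebiusdia} gives that the generalized Voronoi regions $\Vor_n$ reduce to ordinary Euclidean Voronoi cells. The outer minimization in \equref{eq:phoptlocal} then becomes a standard $N$-facility quantization problem on $\Ome$ with an increasing convex distortion of distance, and the asymptotic hexagonal-cell optimality for two-dimensional quantization (the Fejes-T{\'o}th / Newman / Gersho results) shows that in the high-resolution limit $N\to\infty$ with uniform $\df$ the optimal cells are, up to congruence, regular hexagons $\Hexa$ of area $H=A/N$ with generators placed at the centroids. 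This reduces the total average power to the per-cell integral
\[
 \AvDis(\vQ^*,h)\sim \frac{1}{Hh^\kap}\int_{\Hexa}(\norm{\vome}^2+h^2)^\gam d\vome.
\]

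To solve the remaining one-dimensional optimization over $h$, I would expand $(\norm{\vome}^2+h^2)^\gam$ by the binomial theorem and interchange the finite sum with the integral to write
\[
 \AvDis(\vQ^*,h)\sim \frac{1}{Hh^\kap}\sum_{k=0}^\gam\binom{\gam}{k}h^{2(\gam-k)}M_{2k},\quad M_{2k}:=\int_{\Hexa}\norm{\vome}^{2k}d\vome.
\]
The even polar moments $M_{2k}$ of a regular hexagon of area $H$ are computed in closed form by dissecting $\Hexa$ into six congruent equilateral triangles and integrating in polar coordinates; this gives $M_{0}=H$, $M_{2}=\tfrac{5}{18\sqrt{3}}H^{2}$, $M_{4}=\tfrac{14}{405}H^{3}$, and the corresponding $M_{6}$ (a rational multiple of $H^{4}$) needed for the $\gam=3$ case. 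Differentiating $\AvDis$ in $h$, setting the derivative to zero and multiplying through by $h^{\kap+1}$ turns the first-order condition into a polynomial equation of degree $\gam$ in $u=h^{2}$, with coefficients determined by the $M_{2k}$. For $\gam=1$ this is linear and immediately yields $c(1)=\sqrt{5/(18\sqrt{3})}$; for $\gam=2$ it is a quadratic whose positive root via the quadratic formula reproduces the stated $c(2,\kap)$; for $\gam=3$ it is a cubic whose positive real root I would solve by Cardano's formula, and the auxiliary quantities $u(\kap), v(\kap)$ appearing in \equref{eq:optcomonheights} are precisely the Cardano pieces (the depressed-cubic constant and the square root of its discriminant).

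Substituting $h^{*}=c(\gam,\kap)\sqrt{H}$ back into the expression for $\AvDis$ and re-inserting the $1/D_0(\kap)$ factor that the normalization $\bet_0(\alp)D_0(\kap)=1$ absorbed then produces the closed-form expressions for $\AvDis^{*}(\gam,\kap,H)$ in \equref{eq:optimalaveragepower}. The main obstacle I expect is the rigorous justification of the hexagonal asymptotic shape for this height-parameterized kernel: the classical hexagonal-optimality theorems are stated for monotone increasing distortions of Euclidean distance (exactly what we have for each fixed $h$), so the argument applies pointwise in $h$, but to conclude the joint statement one must control the convergence uniformly in $h$ in order to legitimately interchange the $N\to\infty$ limit with the minimization over the common height. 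A secondary point is that the positive root of the derivative polynomial must be shown to be the unique global minimizer of $\AvDis(h)$ on $h\geq \hmin$ and to lie above $\hmin$ throughout the admissible range $1\leq\kap\leq 2\gam-1$; this should follow from a straightforward sign analysis of the derivative polynomial and a convexity check via the second derivative.
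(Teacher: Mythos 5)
Your proposal is correct and follows essentially the same route as the paper: invoke the asymptotic hexagonal/centroidal optimality for monotone distortions of Euclidean distance, reduce to a per-cell integral over a regular hexagon, expand into even polar moments (which the paper computes via twelve right triangles rather than six equilateral ones), and solve the resulting degree-$\gam$ polynomial first-order condition in $h^2$ by the linear/quadratic/Cardano formulas before substituting back. The uniqueness of the positive root, which you defer to a sign analysis, is handled in the paper by noting that the two integrals in the stationarity condition are both increasing in $z=h^2$, only one vanishes at $z=0$, and the other grows faster when $2\gam/\kap>1$.
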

\begin{proof}
  See \appref{sec:proof2D}
\end{proof}
\begin{remark}
  From simulations, we can find that, for fixed $\kap$,  the factor $c(\gam,\kap)$ is decreasing in $\gam\geq 1$  and
  hence the optimal height for fixed $H$, see \figref{fig:2Doptimalheightk}. Similar results hold for the 2-D deployment and can be shown analytically  \cite[Thm.1]{GWJ18a}.

  For a realistic path-loss model of the wireless link in \eqref{eq:pathlossmodel}, the ratio $H=A/N$ and the parameters $\alp$ and $\kap$ have to be
  chosen such that the optimal height satisfies $h^*>d_0$ and  $h^* \geq \hmin>d_0$ given a minimal height constraint $\hmin$.
  Note that if $d_0=1$, then $\bet_0$ in \eqref{eq:bet} is independent of $\alp$.
\end{remark}
\figref{fig:D1D2} and  \figref{fig:D1D2D3kappa} show the minimal average transmit-power in dB over
$H=A/N$ for $\kap=1$ and over various $1\leq\kap\leq 2\gam-1$ and $\gam=1,2,3$ corresponding
to $\alp=2\gam-\kap$, respectively.
Note that, in \figref{fig:D1D2D3kappa}, an increase in $\kap$ decreases $\alp$ since $\gam$ is fixed.
%However, we did not include the directivity in the plots, which will give the actual power gain by an increased $\kap$.
Since the directivity $D_0=2(\kap+1)$ increases with $\kappa$, it reduces the average transmit-power in \eqref{eq:optimalaveragepower}. For $\alp=1$ and $\kap=1$ (solid curve), we gain $3$dB if we use $\kap=3$ (dashed curve).
Hence, in \figref{fig:D1D2D3kappa}, $\AvDis^*(2,3,100)$ is the smallest
average transmit power for $H=100$, given for example by $N=100$ UAVs covering  $\Ome=[0,100]^2$.

Let us note that all values of $H=A/N$ can be achieved in the high resolution case by an appropriate choice of the  target size $A$.
%
%We have plotted the optimal common height for $\gam\in\{1,2,3\}$ over various antenna powers $\kap$, where $\alp=2\gam-\kap$, in
%\figref{fig:scalingkappa}.
%
An increase of $\kap$ for fixed $\alp$ can change the transmit-power, but only slightly changes the
optimal common height. Hence, the optimal $\kappa$ can be determined to minimize the
transmit-power by optimizing the effective beamwidth.

\begin{figure}
\setlength{\floatsep} {0pt plus 2pt minus 6pt}
\setlength{\textfloatsep} {20pt plus 2pt minus 6pt} % philipp
\begin{center}
  \subfloat[]{
    \includegraphics[scale=0.22]{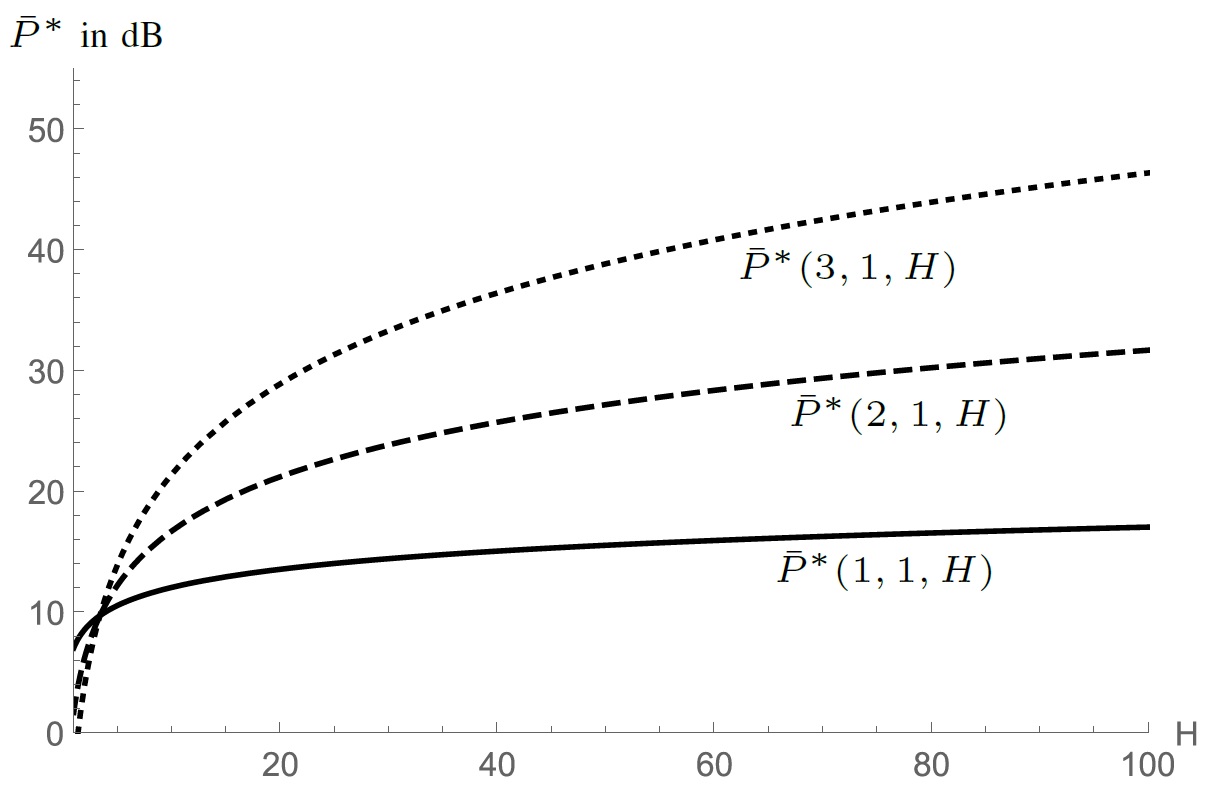}
    \label{fig:D1D2}}
  \hspace{1cm}
  \subfloat[]{
    \includegraphics[scale=0.22]{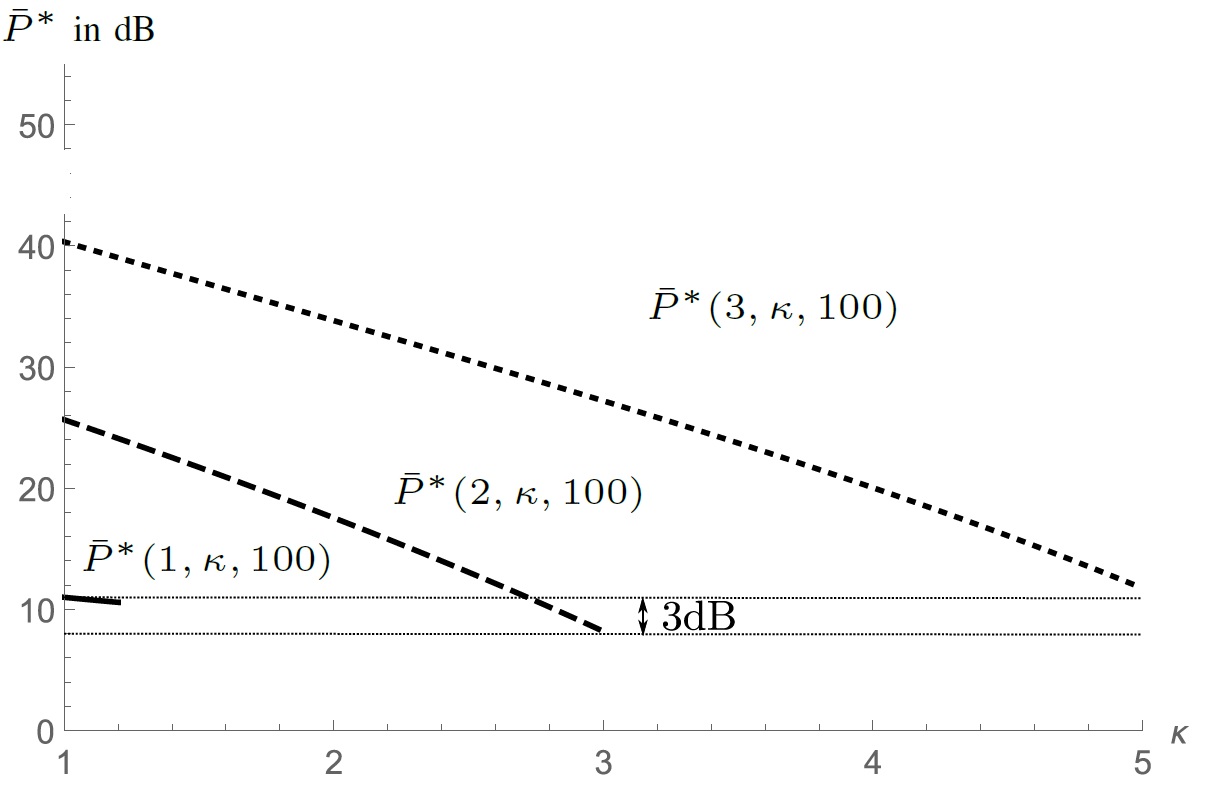}
    \label{fig:D1D2D3kappa}}
\end{center}
\captionsetup{justification=justified}
\vskip-3ex
\caption{\small{Optimal average-powers for $\gam\!=\!1,2,3$ {\protect\subref{fig:D1D2}} over cells of size $H\!=\!A/N$ for $\kap=1$ and {\protect\subref{fig:D1D2D3kappa}} over various $\kap$ for $H=100$.}}% We obtain for $\alp=1$ with $\kap=3$ a gain of $3$dB compared to $\kap=1$.}}
\label{fig:commonheight2d}
\end{figure}%\\[-\baselineskip]

From simulations, for a uniform user density, in the next section, we see that restricting the heights to be the same, i.e., a common height, does not lead to smaller average transmit-power if $N$ is very large. Therefore, we conjecture, that the optimal
common height with centroidal ground positions achieves the global minimal average transmit-power.
\section{Llyod-like Algorithms and Simulation Results}\label{sec:simulations}
In this section, we introduce two Lloyd-like algorithms, Lloyd-A and Lloyd-B, to optimize the deployment for 3-dimensional scenarios. The proposed
algorithms iterate between two steps: (1) The UAV positions are optimized through gradient descent while the
ground cell partitioning is fixed; (ii) The partitioning is optimized while the UAV positions are fixed.  In Lloyd-A, all UAVs
 share the common flight height while Lloyd-B allows UAVs with different flight heights. More details can be found in Algorithm \ref{LloydA}.
\begin{algorithm}[!htb]
\setlength\abovecaptionskip{0pt}
\caption{Lloyd-like Algorithms (Lloyd-A and Lloyd-B)}
\label{LloydA}
\begin{algorithmic}[1]
\INPUT
Target area: $\Omega$;
probability density function: $\lambda(\cdot)$;
the initial UAV ground deployment: $\vP=(\vp_1,\vp_2,\dots,\vp_N)$;
the initial UAV heights: $\vh=(h_1,h_2,\dots,h_N)$ ( $h_1=h_2=\dots=h_N$ for Lloyd-A);
path loss parameter: $\alpha$;
the  antenna pattern exponent: $\kappa$;
the minimum flight height: $h_{min}$;
the initial step size: $\delta$;
the stop threshold: $\epsilon$.
\OUTPUT
the final UAV ground deployments $\vP=(\vp_1,\vp_2,\dots,\vp_N)$;
the final flight height: $\vh=(h,\dots,h)$ for Lloyd-A or $\vh=(h_1,h_2,\dots,h_N)$ for Lloyd-B;
Total  average transmit-power  at the final deployment $\AvDis(\vP,\vh)$.
\State Calculate the generalized Voronoi regions $\Vor_n$, $\forall n\in\{1,\dots,N\}$
\Do
\State Calculate the old total power $\AvDis_{old}=\AvDis(\vP,\vh)$
\State Calculate the gradient $\nabla_{\vp_n}$ and $\nabla_{\vh_n}$ by (\ref{eq:pnopt}) and (\ref{eq:hopt})
\State Initialize step size $t=\delta$
\If {($\nabla_{\vp_n}\ne0$ or $\nabla_{\vh_n}\ne0$, $\forall n\in\{1,\dots,N\}$)}
\Do
\State Calculate the new ground positions $\vp'_n=\vp_n-t*\nabla_{\vp_n}$
\State Calculate the new heights
$\begin{cases}
h'_n=max(h_{min},h_n-t*\sum_{n}\nabla_{\vh_n}), & \text{Lloyd-A}\\
h'_n=max(h_{min},h_n-t*\nabla_{\vh_n}),  &\text{Lloyd-B}\\
\end{cases}$
\State Adjust the step size $t=t/2$
\doWhile{$\AvDis(\vP,\vh)\le \AvDis(\vP',\vh')$}
\EndIf
\State Update UAV deployment $\vP=\vP'$, $\vh=\vh'$
\State Update the generalized Voronoi regions $\Vor_n$, $\forall n\in\{1,\dots,N\}$
\State Calculate the new total power $\AvDis_{new}=\AvDis(\vP,\vh)$
\doWhile{$\frac{\AvDis_{old}-\AvDis_{new}}{\AvDis_{old}}>\epsilon$}
\end{algorithmic}
\end{algorithm}

In what follows, we provide the simulation results over the two-dimensional target region $\Omega=[0,1000]^2$ with
uniform and non-uniform density functions.  The non-uniform density function is a Gaussian mixture of the form
$\sum_{k=1}^{3}\frac{A_k}{\sqrt{2\pi}\sigma^2_k}\exp{\left(-\frac{\|\vome-\vc_k\|^2}{2\sigma_k}\right)}$, where the
weights, $A_k$, $k=1,2,3$ are $0.5$, $0.25$, $0.25$, the means, $\vc_k$, are $(300,300)$, $(600,700)$, $(750,250)$, the standard deviations,
$\sigma_k$, are  $1.5$, $1$, and $2$, respectively.
 All length parameters are measured in meters. The power values depend on the fixed parameters defining $\bet_0$ in \eqref{eq:bet}, given by the bandwidth $B$, data-rate $R_b$, noise-power $N_0$ at the UAV, channel attenuation power $\sig_{\psi}^2$, antenna characteristic $K$, UE antenna gain $\Gtx$ ( $=1$ for perfect isotropic antennas), and the reference distance $d_0^\alpha$ (usually set to $1$). The beam-exponent $\kappa$ and path-loss exponent $\alpha$ are dimensionless and can be adjusted to obtain optimal heights and average-transmit powers %which are
in a desired range. In our simulations, we use $\bet_0=1$.

\setlength{\floatsep} {0pt plus 2pt minus 6pt}
\setlength{\textfloatsep} {2pt plus 2pt minus 1pt} % philipp
\begin{figure}[!htb]
\setlength\abovecaptionskip{0pt}
\centering
\subfloat[]{\includegraphics[width=2in]{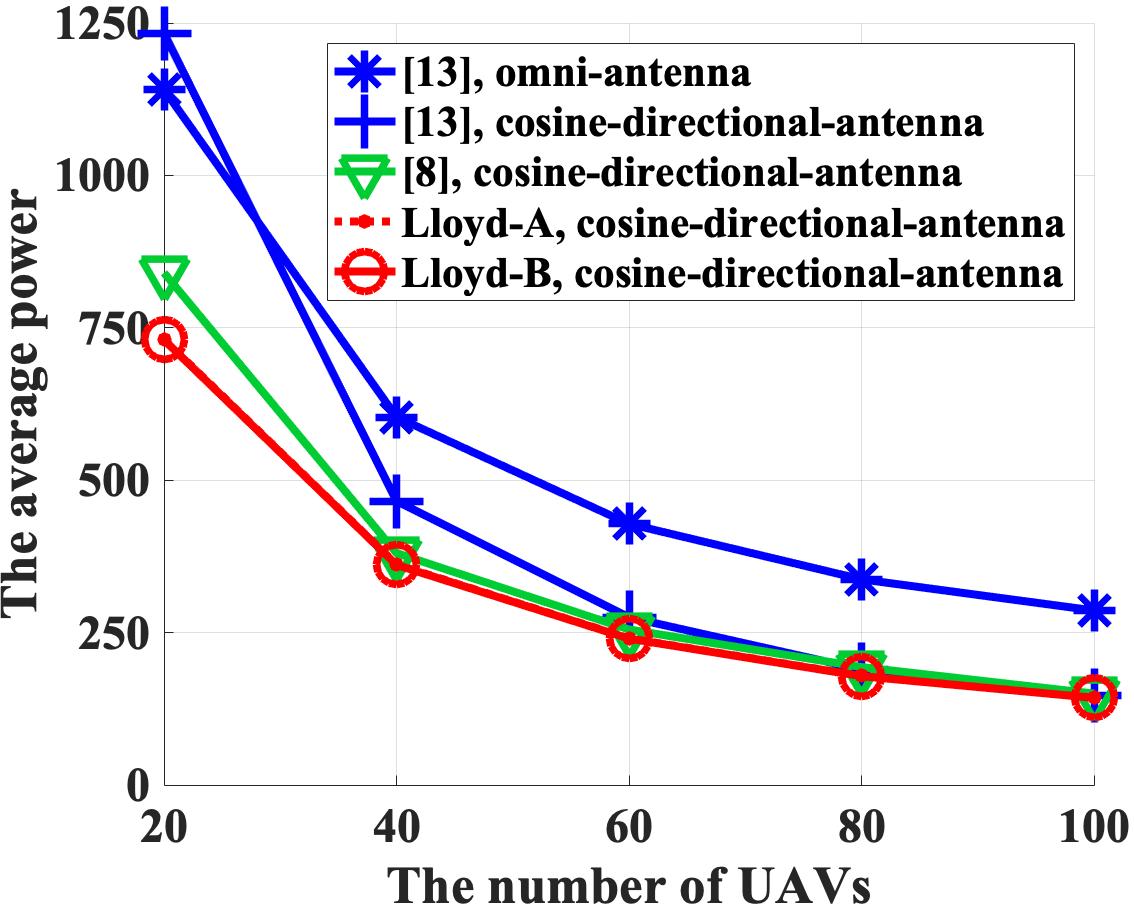}
\label{Alpha2MinH25}}
\hfil
\subfloat[]{\includegraphics[width=2in]{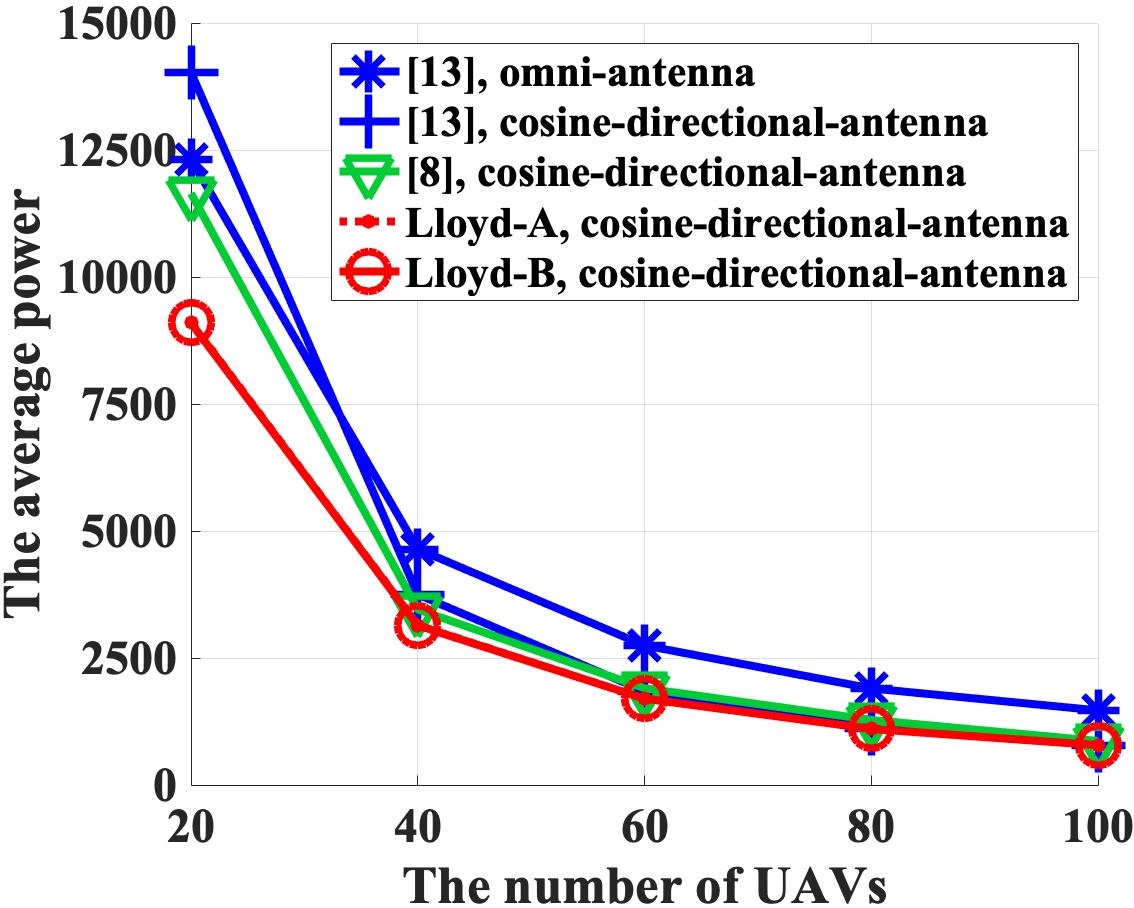}
\label{Alpha3MinH25}}
\hfil
\subfloat[]{\includegraphics[width=2in]{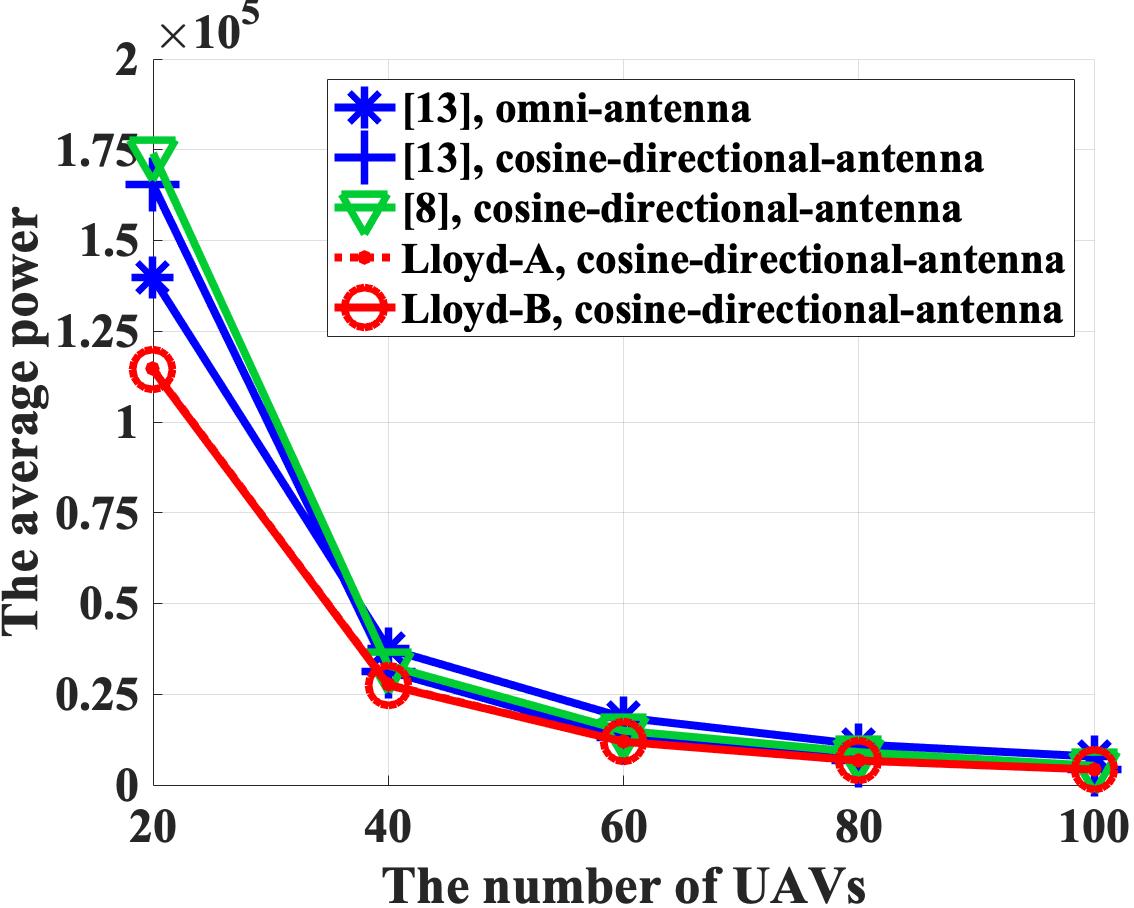}
\label{Alpha4MinH25}}
\hfil
\subfloat[]{\includegraphics[width=2in]{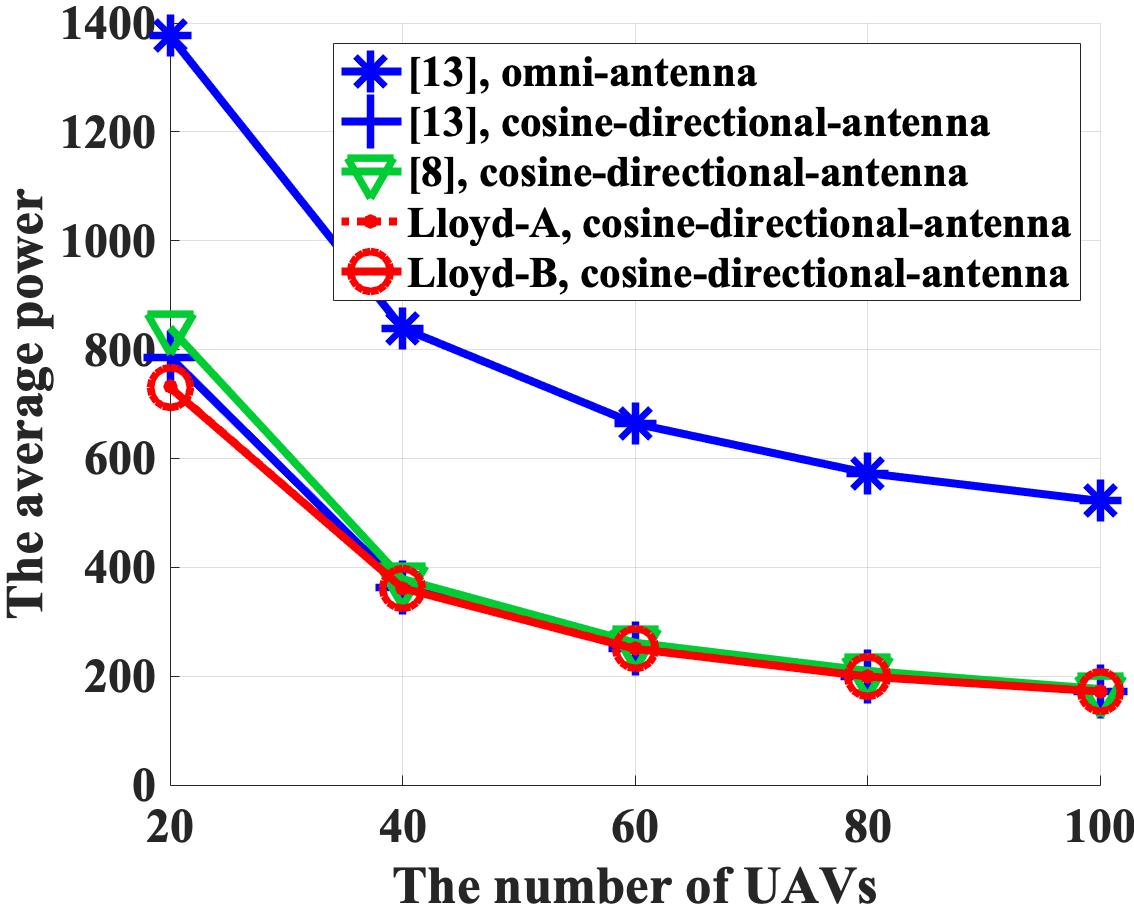}
\label{Alpha2MinH50}}
\hfil
\subfloat[]{\includegraphics[width=2in]{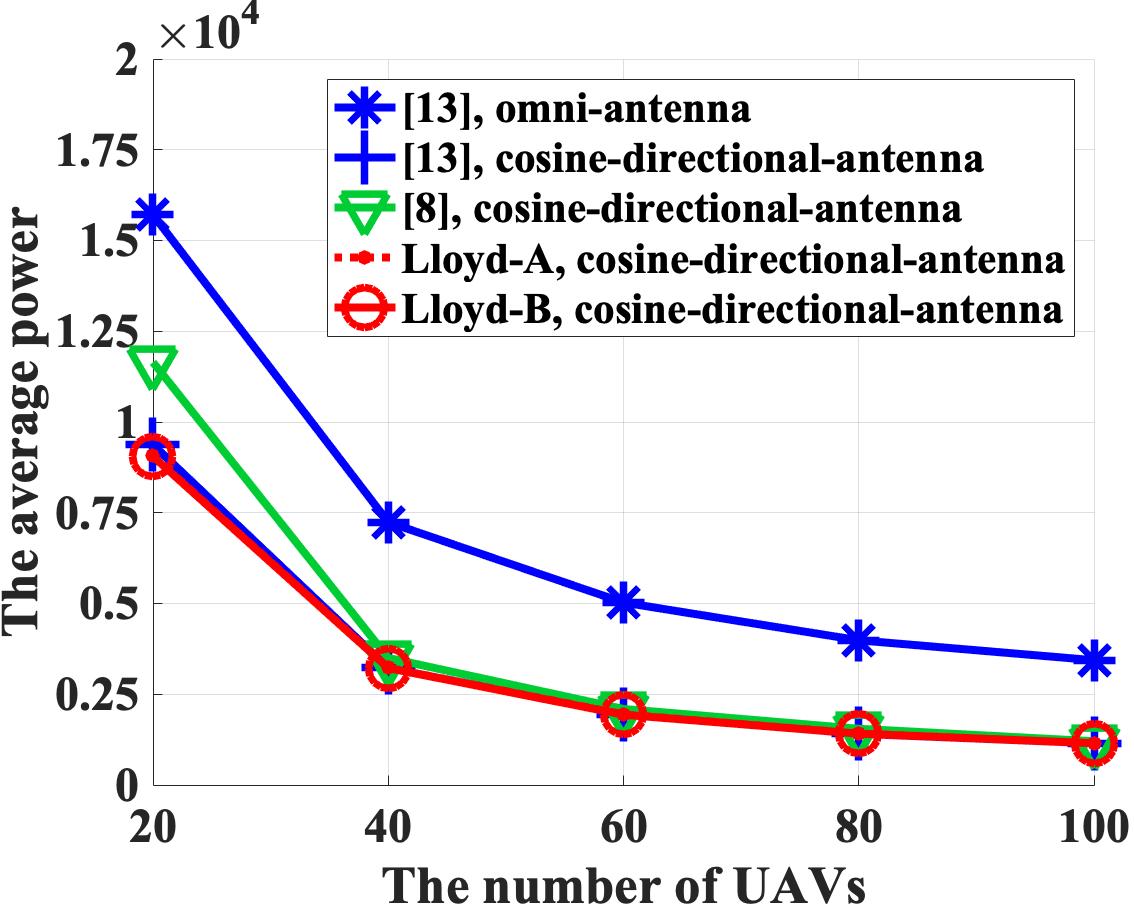}
\label{Alpha3MinH50}}
\hfil
\subfloat[]{\includegraphics[width=2in]{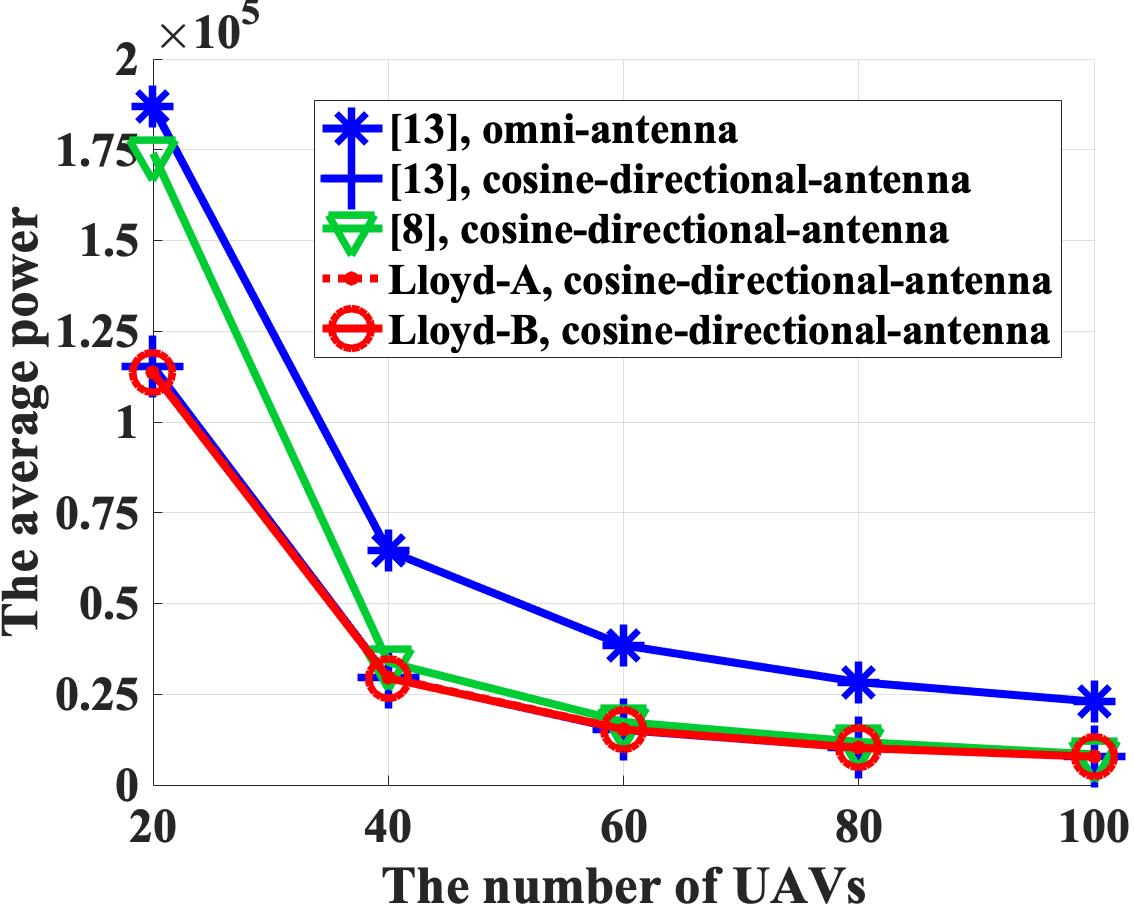}
\label{Alpha4MinH50}}
\captionsetup{justification=justified}
\caption{\small{The performance comparison for various path-loss exponents $\alp$ and various minimum flight heights with
the uniform density function  and beam-exponent $\kap=1$. (a) $\alpha=2$, $h_{min}=25$; (b) $\alpha=3$, $h_{min}=25$; (c) $\alpha=4$, $h_{min}=25$; (d) $\alpha=2$, $h_{min}=50$; (e) $\alpha=3$, $h_{min}=50$; (f) $\alpha=4$, $h_{min}=50$.}} %KSS Algorithm, MSBD Algorithm, and Lloyd-A(B), are, respectively, highlighted by blue, green, and red.}}
\label{Distortionkap1}
\vskip1ex
\end{figure}

To evaluate the performance, we compare the average transmit-power for deployments derived by Lloyd-A, Lloyd-B, the algorithm in \cite{KKSS18}, denoted by KSS, and the algorithm in \cite{MWMM}, denoted by MSBD, with various path-loss exponents $\alp$ and various minimum flight heights\footnote{To make KSS and MSBD  Algorithms  satisfy the minimum flight height constraint, we adjust their final flight heights by $h_{min}$, i.e., $h_n = \max(h_n, h_{min})$.}.
KSS deployment Algorithm is designed to minimize the average power of UAVs with omni-directional antennas whose antenna gains are identical among all directions \pw{($\kap=0$)}. Taking a "constant" directional antenna pattern into consideration, MSBD Algorithm applies a circle packing \cite{ZGTT} to derive the ground positions of the $N$ UAVs with a common cell radius\footnote{The optimal $N-$circle packing over a square can be found, e.g., at http://hydra.nat.uni-magdeburg.de/packing/csq/csq.html.} and then determines the flight heights in terms of $\thtHPBW$.
In our simulations, the \pw{HPBW} for "constant" antenna patterns is set to $\thtHPBW=120^{\circ}$.
To make a fair comparison, the directivity parameter for cosine-shape patterns is set to $\kap=1$, which according to \eqref{eq:HPBW} corresponds to $\thtHPBW(1)=120^{\circ}$.
For $\kap=2$, we obtain $\thtHPBW(2)=90^{\circ}$.
Lloyd-like algorithms (KSS, Lloyd-A and Lloyd-B) require an initial UAV deployment.
We generate 100 initial UAV deployments randomly, i.e., every UAV location is generated according to a  uniform distribution on $1000\times1000\times100$.
Then, the Lloyd-like algorithms are initialized with the generated random deployments and the power is calculated as the average \pw{over} 100 runs.

The performance comparisons for different path-loss parameters and different minimum flight heights are shown in \figref{Distortionkap1}.
The omni-antenna power and cosine-directional-antenna power are calculated from \eqref{eq:Pbar} divided by the maximal directivity\footnote{In Section \secref{sec:model}, we assume $\bet_0(\alp)D_0(\kap)=1$ for simplicity. However, when we compare the powers of different antennas (or different $\kappa$s), the antenna directivity $D_0(\kap)$ is taken into account.} \eqref{eq:maxdirect} with $\kappa=0$ and $\kappa=1$, respectively.
Note that KSS Algorithm generates the optimal deployment for UAVs with omni-antennas.
When the number of UAVs is large, \figref{Distortionkap1} shows that KSS Algorithm can benefit from replacing omni-antennas by cosine-directional antennas. However, when the number of UAVs is small,  KSS Algorithm spends more energy on cosine-directional antennas compared to  omni-antennas.
For example, given $20$ UAVs in Fig. \ref{Alpha2MinH25}, KSS Algorithm spends the average power of $\AvPtx=1233$ on cosine-directional antennas which is larger than that of omni-antennas, i.e., $\AvPtx=1141$.
However, if $40$ UAVs are deployed, the omni-antenna power \pw{$\AvPtx=604$} exceeds the cosine-directional antenna power of  $\AvPtx=466$.

By comparing the average powers \pw{of} cosine-directional-antennas provided by different algorithms,  %(the dots, triangles, pluses and circles in Figs. \ref{Alpha2MinH25}, \ref{Alpha3MinH25} and \ref{Alpha4MinH25}),
we can conclude that the proposed Lloyd-B is the best solution for cosine-directional-antennas.
Moreover, cosine-directional-antennas' minimum powers which are achieved by Lloyd-B, are always smaller than omni-antennas' minimum powers which are achieved by KSS Algorithm.
In other words, cosine-directional-antennas are more energy-efficient compared to omni-antennas.
An intuitive explanation is that omni-antennas radiate power  among all directions, which is a waste of energy, while cosine-directional-antennas concentrate the power \pw{to a specific UE cell on the ground}.
Furthermore, using cosine-directional-antennas, UAVs can cover all UEs in the target area while MSBD Algorithm using a constant-directional antenna model achieves only a partial coverage, due to non-overlapping circular ground cells.
For example, in Fig. \ref{Alpha2MinH25}, MSBD deployment with constant-directional-antenna pattern only covers (or serves) 78.54\% of the target area.
In fact, the constant-directional-antenna model is an ideal but not realistic antenna model.
When the constant-directional-antenna model is replaced by the realistic  cosine-directional-antenna model, the corresponding power gain is significantly increased. Furthermore, the circle packing solution, used in MSBD Algorithm, is only available over some special-shaped target regions, e.g, squares and circles. However, Lloyd-A(B) can be applied to arbitrary target regions.

\begin{figure}[t]
\setlength\abovecaptionskip{0pt}
\setlength\belowcaptionskip{0pt}
\centering
\subfloat[]{\includegraphics[width=2in]{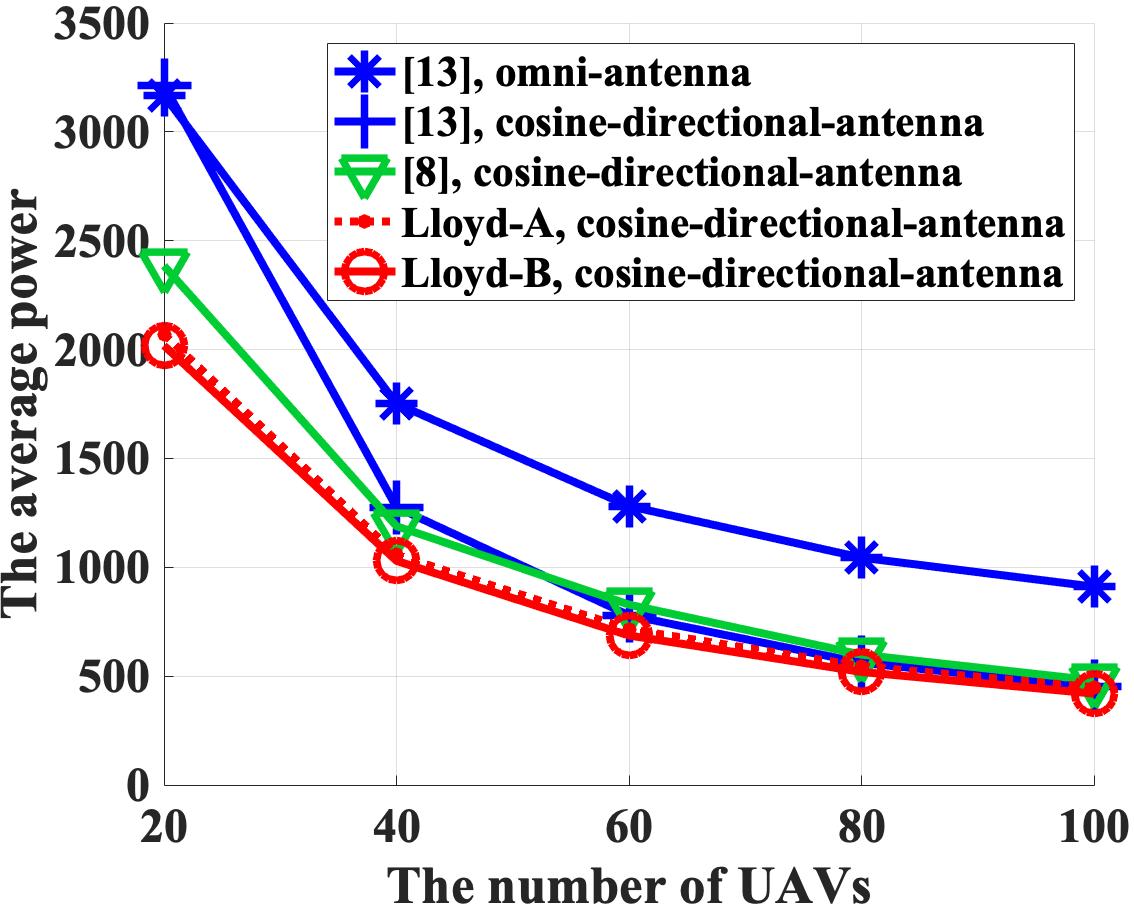}
\label{Alpha2MinH25NonUniform}}
\hfil
\subfloat[]{\includegraphics[width=2in]{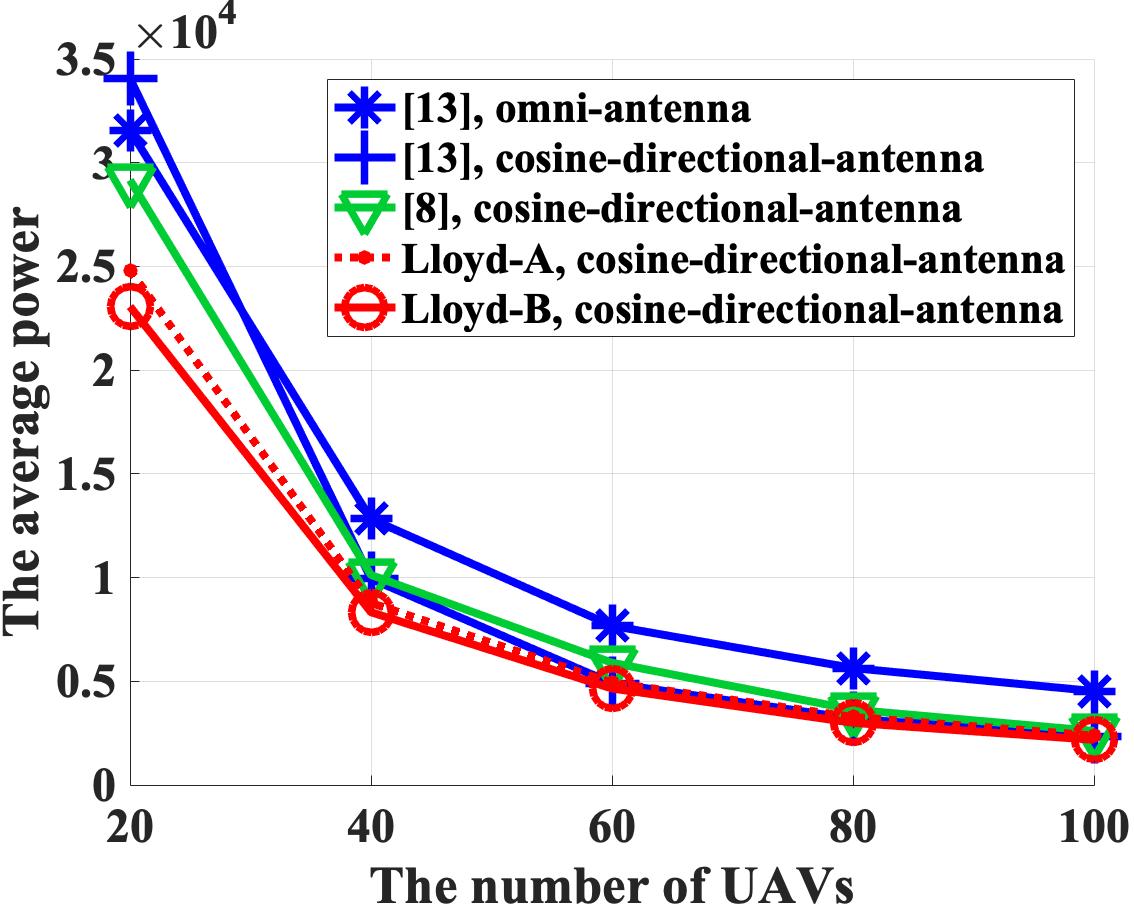}
\label{Alpha3MinH25NonUniform}}
\hfil
\subfloat[]{\includegraphics[width=2in]{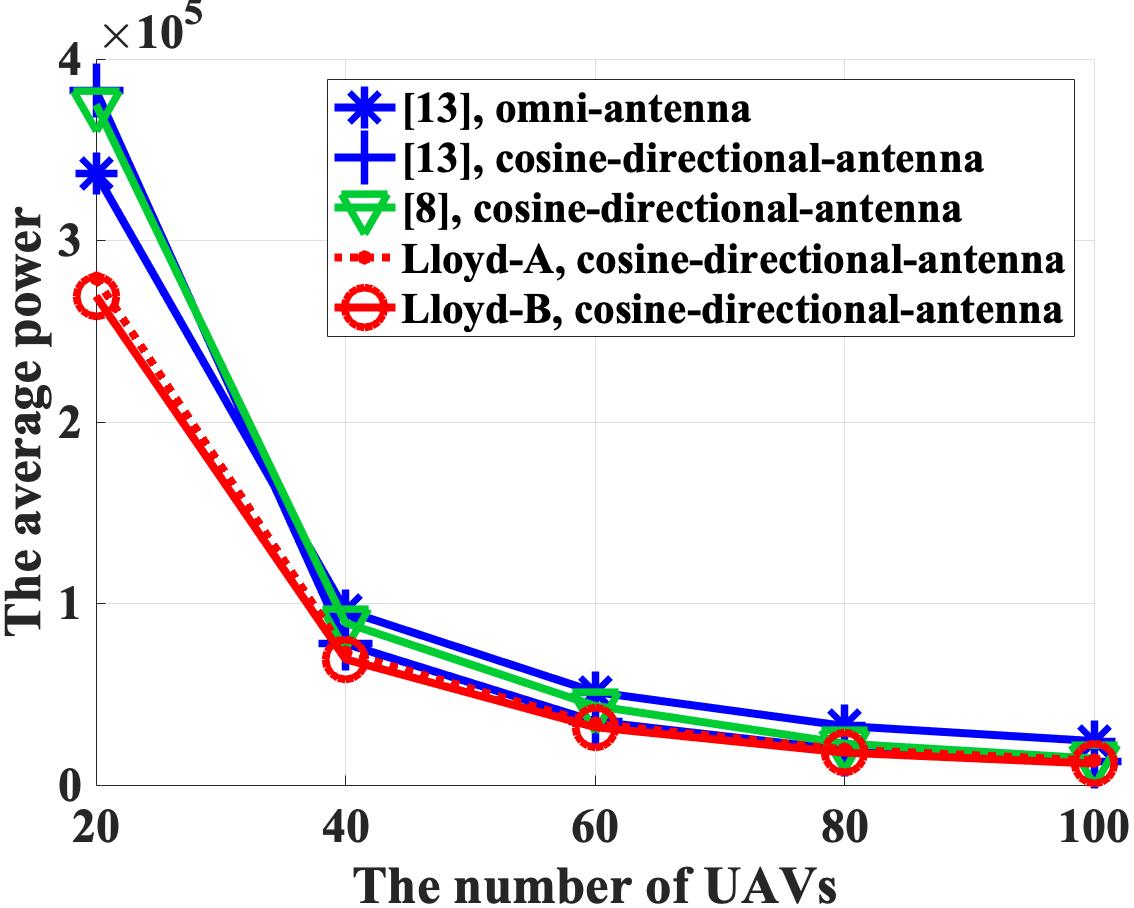}
\label{Alpha4MinH25NonUniform}}
\captionsetup{justification=justified}
\caption{\small{The performance comparison for various path-loss exponents $\alp$ and various minimum flight heights with a non-uniform density function. (a) $\alpha=2$, $h_{min}=25$; (b) $\alpha=3$, $h_{min}=25$; (c) $\alpha=4$, $h_{min}=25$.}} %KSS Algorithm, MSBD algorithm, and Lloyd-A(B), are, respectively, highlighted by blue, green, and red. Three kinds of antennas are, respectively, denoted by dots, crosses, and circles.}}
\label{Distortion}
\end{figure}

From %\figref{Distortion},
\figref{Distortionkap1}, one can also find that the average powers spent by Lloyd-A and Lloyd-B are very close (the difference is less than $0.5\%$), indicating the optimality of the common height %, as proved for the
%one-dimensional case in \secref{sec:optimize1D}, might be extended to the two-dimensional case
when the density function is uniform.
However, the gap between Lloyd-A and Lloyd-B in Figs. \ref{Alpha2MinH25NonUniform}, \ref{Alpha3MinH25NonUniform}, \ref{Alpha4MinH25NonUniform} for a non-uniform density function  is non-negligible.
For instance, given $20$ UAVs with path-loss exponent $\alpha=3$, the average power by Lloyd-A is $248$ while the average power by Lloyd-B is only $230$ which is 7.3\% lower.
As a result, the optimality of the common height cannot be extended to the scenarios with non-uniform density functions.
 Meanwhile, the minimum flight height has an influence on the deployment. If the minimum flight height is large, it may limit the capability of the algorithm to choose the best locations and force it to place the UAVs at the minimum height.
%This sentence grammatically doesn't make sense (also it's not clear what you're trying to say):
%Without the minimum flight height, Lloyd-A(B) can place UAVs with the optimal heights\footnote{The optimal height depends on $N$ and $\alpha$. More details about the optimal height can be found in Section \secref{sec:dab}} which are between 25 and 100.
%I've deleted the below text and replaced it with mine.
%When the minimum flight height is small, e.g., $25$ in Figs. \ref{Alpha2MinH25}, \ref{Alpha3MinH25} and \ref{Alpha4MinH25}, UAVs can be deployed to the best locations via Lloyd-A(B).
%However, a large minimum flight height, e.g., $50$ in Figs. \ref{Alpha2MinH50}, \ref{Alpha3MinH50} and \ref{Alpha4MinH50}, prevents Lloyd-A(B) from achieving its best performance if the optimal height is lower than 50.
%In this case,
For example, Lloyd-A(B), like KSS Algorithm, places UAVs at the height of 50 in Figs. \ref{Alpha2MinH50}, \ref{Alpha3MinH50} and \ref{Alpha4MinH50}.
As a result, the average power of KSS Algorithm with cosine-directional-antennas is much closer to that of Lloyd-A(B) in Figs. \ref{Alpha2MinH50}, \ref{Alpha3MinH50} and \ref{Alpha4MinH50}.
Nonetheless, as shown in  Figs. \ref{Alpha2MinH50}, \ref{Alpha3MinH50}, and \ref{Alpha4MinH50}, Lloyd-A(B)'s power is not larger than that of KSS Algorithm even if the performance is limited by a  large minimum flight height.

\begin{figure}[t]
\centering
\subfloat[]{\includegraphics[width=2.7in]{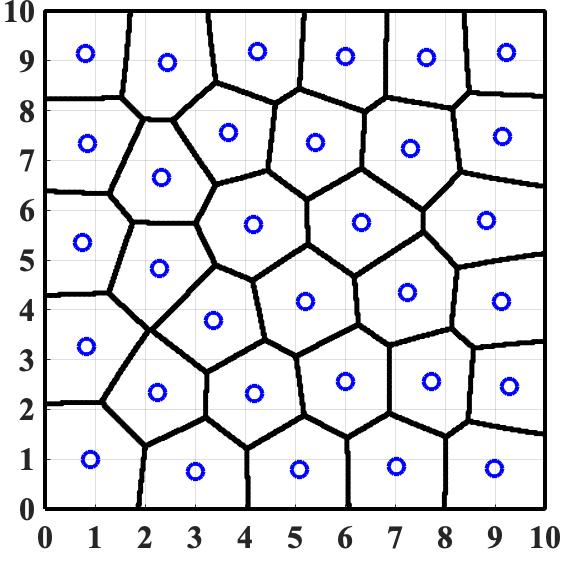}
\label{uniformPartitions32}}
\hfil
\subfloat[]{\includegraphics[width=2.7in]{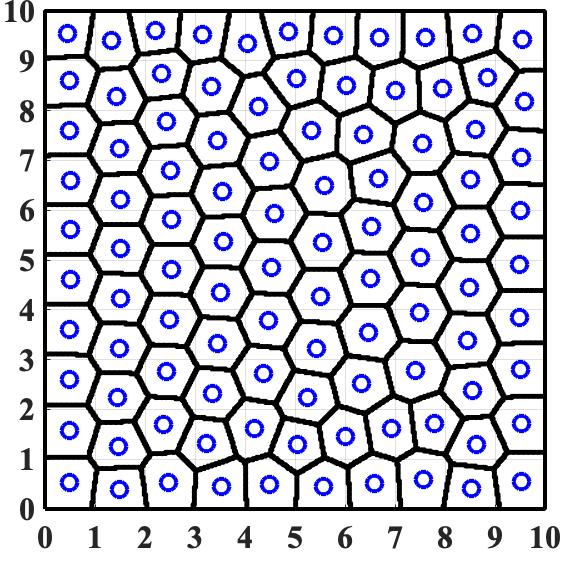}
\label{uniformPartitions100}}
%\captionsetup{justification=justified}
\vspace{-2ex}
%\caption{\small{The UAV projections on the ground with generalized Voronoi Diagrams where $\alpha=2$ and
%  the probability density function is uniform.
%  {\protect\subref{uniformPartitions32}} 32 UAVs. {\protect\subref{uniformPartitions100}} 100 UAVs.}}
%\label{uniformDistortionPartition2}
%\vspace{-4ex}
%\end{figure}
%
%\begin{figure}[t]
\centering
\subfloat[]{\includegraphics[width=2.6in]{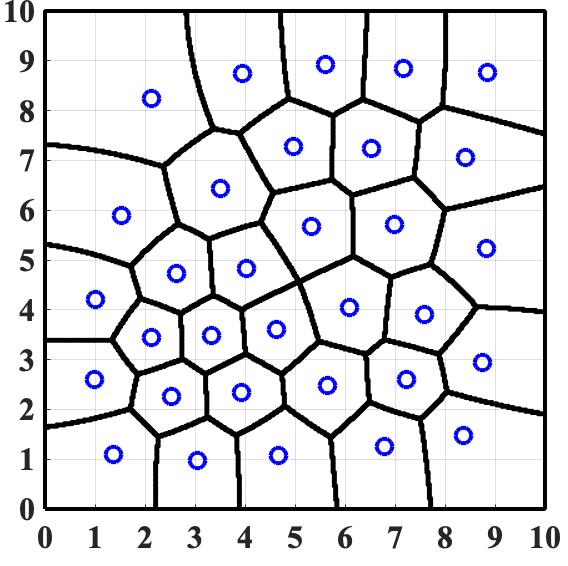}
\label{nonuniformPartitions32}}
\hfil
\subfloat[]{\includegraphics[width=2.6in]{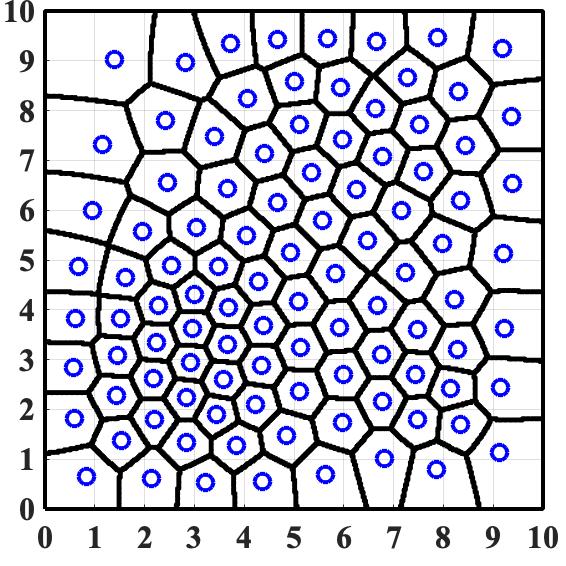}
\label{nonuniformPartitions100}}
\captionsetup{justification=justified}
\vspace{-2ex}
%\caption{\small{The UAV projections on the ground with generalized Voronoi Diagrams where $\alpha=2$
%  and the probability density function is non-uniform. {\protect\subref{nonuniformPartitions32}} 32
%  UAVs. {\protect\subref{nonuniformPartitions100}}  100 UAVs.}}
\caption{\small{UAV ground cells, } generalized Voronoi Diagrams, for $\alpha=2$
  and a uniform probability density function  {\protect\subref{uniformPartitions32}} with 32 UAVs, {\protect\subref{uniformPartitions100}} with 100 UAVs, and a non-uniform density  {\protect\subref{nonuniformPartitions32}} with 32
  UAVs, {\protect\subref{nonuniformPartitions100}}  with 100 UAVs.}
\label{uniformDistortionPartition3}
%\vspace{-4ex}
\end{figure}

Figs. \ref{uniformPartitions32} and \ref{uniformPartitions100} illustrate the UAV ground cells and their partitions for a uniform distribution and square region.
As the number of UAVs increases, the UAV partitions converge to
hexagons. This implies that the optimality of congruent partitioning in the one-dimensional case \cite[Thm.1]{GWJ18a}  might be valid for uniformly distributed users in the  two-dimensional case as well.
However, the UAV partitions in Figs.
\ref{nonuniformPartitions32} and \ref{nonuniformPartitions100} show that congruent partitioning is not optimal for a non-uniform distribution.

\subsection{Different antenna beamwidths}\label{sec:dab}

If we increase the antenna parameter $\kap$ in \eqref{eq:Gdirected}, the radiation gain  concentrates on a smaller
area and hence decreases the effective beamwidth. This affects the scaling factors $c(\gam,\kap)$ and directivity $D_0(\kap)$ in \thmref{thm:2D}
but not the dependence on $H$. To verify the optimal common heights derived in Theorem \ref{thm:2D}, we first
perform a brute force search to obtain the optimal height for one UAV over uniformly distributed regular hexagons with different
sizes.  For each regular hexagon, we generate $5000$ samples%
\footnote{5000 UAV heights are uniformly selected from $[0, L]$, where $L=2R$ is the length of the regular hexagon. UAV
  ground position is placed at the geometric centroid of the hexagon according to Theorem \ref{thm:2D}. }
and compare the powers.  The optimal UAV height is the one with the minimum power among the generated
samples.
\figref{fig:2Doptimalheightk} depicts the optimal common heights of one UAV over a regular hexagon %$\Hexa$
for $\kap=1$ and $\kap=2$. The optimal common height
increases if $\kap$ increases, but the average power does not {necessarily decrease by
increasing $\kap$, as is the case in \figref{fig:D1D2D3kappa}. %for $\kap=1$ to $\kap=3$ (same $\alpha=1$).
Such a conclusion is only valid for large $N$ where $\sqrt{H}\ll h^*$ and the cells are small enough such that small beams can compensate the path-loss by antenna gains.}
%However, a smaller beamwidth might decrease interference between
%neighboring cells and increase the performance if user transmit simultaneously.

Moreover, we employ Lloyd-B to get the standard deviation of the optimal heights of multiple UAVs over a uniformly
distributed $10\times10$ square. From Fig.  \ref{fig:std_k2}, we find that the standard deviation
of heights decreases as the number $N$ of UAVs increases. We also observe the same trend for other $\kappa$s by  simulations. In other words, in the asymptotic regime, the UAVs tend to have
approximately an optimal common height\footnote{Due to the boundary effect, i.e., the target region cannot be perfectly divided
by congruent hexagons, the UAV heights are not exactly the same.}.

\begin{figure}[t]
\centering
\subfloat[]{\includegraphics[width=3in]{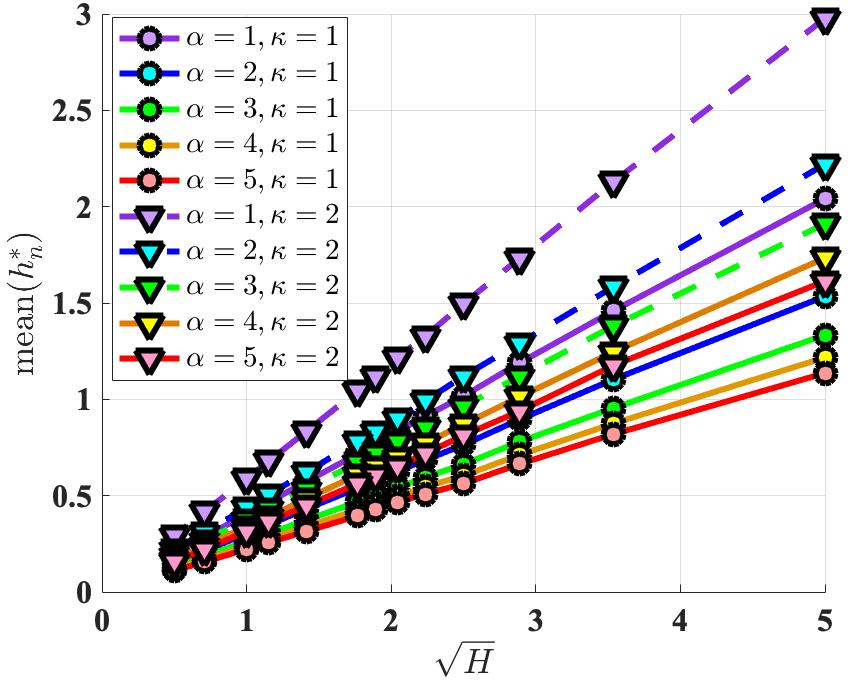}
\label{fig:k1}}
\hfil
\subfloat[]{\includegraphics[width=3in]{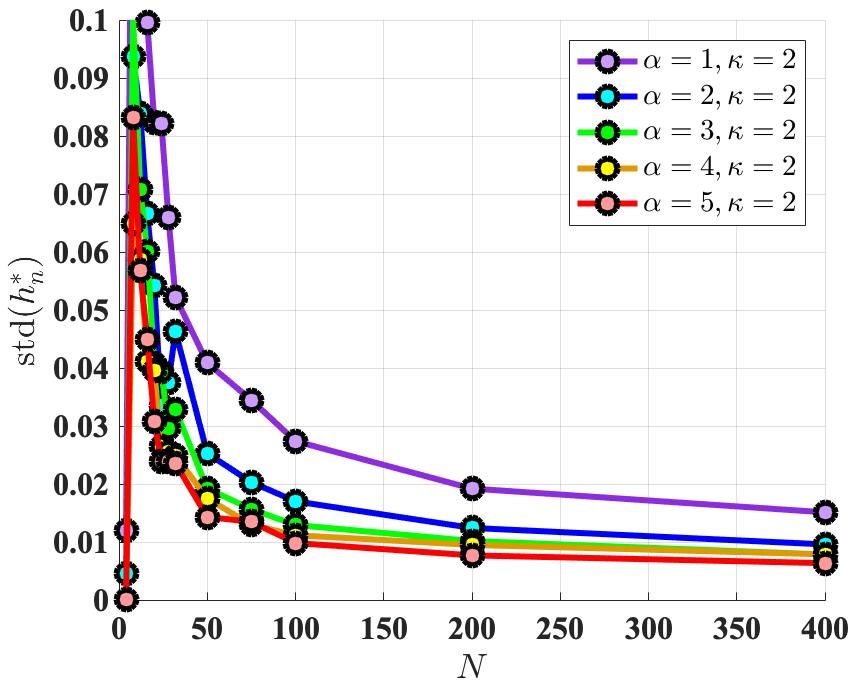}
\label{fig:std_k2}}
\captionsetup{justification=justified}
%\vspace{-2ex}
%\captionsetup{justification=justified}
%\vspace{-2ex}
\caption{\small{{\protect\subref{fig:k1}} Optimal common height for $\alpha=1,2,3,4,5$ and $\kap=1,2$ over $\sqrt{H}=\sqrt{A/N}$, derived with brute force optimization and analytically by
\thmref{thm:2D}. {\protect\subref{fig:std_k2}} The variance (standard deviation) of all N optimal
heights converge exponentially fast in $N$ for $\kappa$ = 2.}}
%\vspace{-4ex}
\label{fig:2Doptimalheightk}
%\vspace{-1cm}
\end{figure}

\section{Conclusions}\label{sec:conclusions}
We studied a continuous coverage problem with $N$ UAVs for providing a static reliable wireless communication link to
%uniformly distributed
ground users in a given target area. We adopted a realistic angle-dependent directional antenna model for the UAVs and an ideal omni-directional antenna model for the ground users. We
derived the exact average power consumption of the users to establish a reliable upload link to the UAVs at a given
bandwidth, noise power, and bit-rate.
The optimal $3$D deployment of the UAVs for minimizing the average {transmit-power} of the ground users is derived in closed form for an arbitrary path-loss exponent, antenna beamwidth, area size,
and number of UAVs.
%We formulated the optimization problem as a continuous quantization problem with parameterized
%distortion measures, depending on the variables heights, beamwidth, and path-loss exponents.
Using the derived necessary conditions for optimal deployment, we designed Lloyd-like algorithms to minimize the transmit-power.
We demonstrated numerically
with brute-force search that asymptotically the global optimal deployment is provided by a hexagonal lattice of the UAV
ground positions and a unique common flight height.  We derived closed form solutions for the optimal common height. The optimal common height depends on the cell size per UAV, the antenna beamwidth, and the path-loss exponent.
%Hereby, an optimal
%design for path-loss exponent between $1-2$ is given by an antenna beamwidth exponent between $1-2$.
%
%To design parameters for an optimal power efficient  base station deployment is of
%utmost importance in the ever growing and expanding wireless cellular networks.
{Our deployment algorithm can be used for static or airborne base-stations.}
An optimal power efficient deployment reduces interference with other wireless communications as well,
which again saves power and resources.

\appendices
\section{Proof of \lemref{lem:moebiusdia}}\label{app:moebiusregions}
 
  The minimization of the distortion functions over $\Omega$ defines an assignment rule for a
  generalized Voronoi diagram $\Vor(\gP,\bH)=\{\Vor_1,\Vor_2,\dots,\Vor_N\}$ where 
  \begin{align}
    \Vor_{n} =\Vor_n(\gP,\bH):=
      &\set{\vome\in\Ome}{ a_n\Norm{\vp_n-\vome}^2 +b_n \leq  a_m\Norm{\vp_m-\vome}^2 +b_m, m\not=n}
  \end{align}
  is the $n$th generalized Voronoi region \cite[Chap.3]{OBSC00}. Here, we denote the weights as in
  \eqref{eq:Eptx} by the positive numbers $a_n = h_n^{-\frac{\kap}{\gam}}$ and $b_n=h_n^{2-\frac{\kap}{\gam}}$
  and define a \emph{M{\"o}bius diagram} \cite{BK06b,BWY07}. The bisectors of M{\"o}bius diagrams are circles or lines
  in $\R^2$ as we will show below.
  The $n$th Voronoi region is defined by $N-1$ inequalities, which  can be written as the intersection of the $N-1$
  \emph{dominance regions} of $\vp_n$ over $\vp_m$, given by 
  \begin{align}
    \Vor_{nm}=\set{\vome\in\Ome}{ a_n\Norm{\vp_n-\vome}^2 +b_n \leq  a_m\Norm{\vp_m-\vome}^2 +b_m}.
  \end{align}
  If $h_n=h_m$, then $a_n=a_m$ and $b_n=b_m$, such that $\Vor_{nm}$ is the left half-space between $\vp_n$
  and $\vp_m$. For $a_n>a_m$, we can rewrite the inequality as 
  \begin{align*}
  %  a_n \Norm{\vp_n}^2\! +\! a_n \Norm{\vome}^2\! -\! 2a_n\skprod{\vp_n}{\ome} \!-\! a_m\Norm{\vp_m}^2 \!-\! a_m\Norm{\vome}^2
  %  \!+\!2a_m\skprod{\vp_m}{\vome} \!+\! (b_n-b_m) \leq &0\\
  %   (a_n\!-\!a_m)\Norm{\vome}^2 -2 \skprod{a_n \vp_n- a_m \vp_m}{\vome} + a_n \Norm{\vp_n}^2 - a_m \Norm{\vp_m}^2 +(b_n-b_m)
  %   \leq & 0 \\
      \Norm{\vome}^2 -2 \skprod{\vc_{nm}}{\vome} + 
         \frac{a_n^2 \Norm{\vp_n}^2 \!+\!a_m^2 \Norm{\vp_m}^2 \!-\! a_na_m(\Norm{\vp_n}^2 \!+\!\Norm{\vp_m}^2)}{(a_n-a_m)^2} 
         + \frac{b_n-b_m}{a_n-a_m} \leq & 0,
  \end{align*}
  where the center point is given by
  \begin{align}
    \vc_{nm}=\frac{a_n\vp_n- a_m\vp_m}{a_n-a_m}=a_n \frac{\vp_n-h_{nm} \vp_m}{a_n -a_m}=\frac{\vp_n -h_{nm}\vp_m}{1-h_{nm}}, 
  \end{align}
  where we introduced the \emph{parameter ratio} of the $n$th and $m$th quantization points $h_{nm}:= a_m/a_n=\left(h_n /h_m\right)^{\frac{\kap}{\gam}}>0$.
  If $0<a_n-a_m$, which is equivalent to $h_n<h_m$, then this defines a ball (disc) and for $h_n>h_m$ its complement.
  Hence, we have
  \begin{align}
    \Vor_{nm}=\begin{cases}
      \set{\vome\in\Omega}{\Norm{\vome-\vc_{nm}}   \le r_{nm}},&  h_n<h_m\\
       \set{\vome\in\Omega}{ \Norm{\vome- \vp_n}\leq \Norm{\vome- \vp_m}}, &h_n=h_m\\
      \set{\vome\in\Omega}{\Norm{\vome-\vc_{nm}}    \ge  r_{nm}},&  h_n >h_m
    \end{cases}
  \end{align}
  where the radius square is given by
  \begin{align}
    r_{nm}^2 
            &\!\!=\! a_na_m\frac{\Norm{\vp_n-\vp_m}^2}{(a_n-a_m)^2} \!+\! \frac{b_m-b_n}{a_n-a_m}=
            \frac{a_m}{a_n}\frac{\Norm{\vp_n-\vp_m}^2}{(1-\frac{a_m}{a_n})^2}\! +\! \frac{b_m-b_n}{a_n-a_m}
            =\frac{h_{nm} \Norm{\vp_n-\vp_m}^2}{(1-h_{nm})^2} \!+\! \frac{b_m-b_n}{a_n-a_m}\notag.
  \end{align}
  The second summand can be written as
  \begin{align}
    \frac{b_m -b_n}{a_n -a_m} &
    = \frac{h_m^{2-\frac{\kap}{\gam}} - h_n^{2-\frac{\kap}{\gam}}}{h_n^{-\frac{\kap}{\gam}}-h_m^{-\frac{\kap}{\gam}}}
    = \frac{h_n^{2} \left(\left(h_n/h_m\right)^{\frac{\kap}{\gam}-2} -1\right)}{1-\left(h_n/h_m\right)^{\frac{\kap}{\gam}}} 
    = h_n^2 \frac{h_{nm}^{1-\frac{2\gam}{\kap}} -1}{1-h_{nm}}
  \label{eq:bmnanm}.
  \end{align}
  For any $\kap\geq 1,\gam\geq (1+\kap)/2$, we have $2> \kap/\gam>0$ and $1-\frac{2\gam}{\kap}<0$. Hence,  if
  $0<h_n<h_m$, then
  $h_{nm}=(h_n/h_m)^{\frac{\kap}{\gam}}< 1$ and $h_{nm}^{1-\frac{2\gam}{\kap}}>1$ and  if $h_n>h_m>0$, then $h_{nm}> 1$ and
  $h_{nm}^{1-\frac{2\gam}{\kap}}<1$.  In
  both cases \eqref{eq:bmnanm} is strictly positive, which implies a radius
  $r_{nm}>0$ even if $\vp_n=\vp_m$. In fact, the radius will only vanish if $h_n=0$, in which case the region
  will be empty. Note that a region can be empty if the dominance regions do not intersect. For $h_n=h_m$, the radius approaches infinity and the bisection is a line.
  
  %Inserting \eqref{eq:bmnanm} in \eqref{eq:radiusnm} yields the result. 
%

\section{Proof of \thmref{thm:2D}}\label{sec:proof2D}
For the homogeneous case with fixed common height $h=h_n$, the distortion function $d$ is given by a non-decreasing
continuous and positive function  in the Euclidean distance $r=\Norm{\vq-\vome}$ as
\begin{align}
  d(\vq,\vome) =f_{\gam,\kap}(\Norm{\vq-\vome},h)= (\Norm{\vq-\vome}^2 + h^2)^{\gam}/{h^\kap}.
\end{align}
Since we assume a uniform density, we have $\lam(\ome)=1/A$ for all $\ome\in\Ome$.

\noi The optimal deployment
problem with $h=h_n$ has centroidal ground positions given by \cite{DFG99}
\begin{align}
    \int_{\Vor^*_n} \frac{ (\Norm{\vp_n^*-\vome}^2+h^{*2})^{\gam}}{h^{*\kap}} d\vome
    = \min_{\vp\in\Clos(\Vor^*_n)}  \int_{\Vor^*_n} \frac{ (\Norm{\vp-\vome}^2+h^{*2})^{\gam}}{h^{*\kap}} d\vome,
  \label{eq:optground}
\end{align}
where $\Clos(\Vor^*_n)$ denotes the convex closure of the set $\Vor^*_n$.  Unfortunately, there is no closed form
expression for an arbitrary $\gam$.  However, asymptotically ($N\to\infty$, high-resolution) it is known that the
optimal Voronoi regions will be congruent to the regular Hexagon, i.e., $\Vor^*_n\sim\Hexa_n$ \cite{DFG99,Gru99}.
Hence, from the conditions \eqref{eq:pnopt} and \eqref{eq:hopt}, we obtain a local critical common height, if
and only if
\begin{align}
  z=h^{*2} = \frac{ \frac{1}{A} \int_{\Hexa_n} (\Norm{\vome-\vq^*_n}^2+h^{*2})^{\gam} d\vome}{\frac{2\gam}{\kap A}
  \int_{\Hexa_n}
  (\Norm{\vome-\vq^*_n}+h^{*2})^{\gam-1}d\vome}\label{eq:ncz}.
\end{align}
%
% in case the reviewer make problem with the general derivative of h_n in eq:criticalpoint
%Note, that in the homogenous case, the Voronoi regions $\Vor^*_n$ are not dependent from $h$ and the partial derivative
%of $\AvDis$ with respect to $h$ is given by partial derivative of the kernel $f$.
%
We know that $h^*>0$ and hence $h^{*2}=z>0$ is associated to only one height.  Since the optimal ground positions are
all centroidal and the regions $\Vor_n$ are all congruent,  asymptotically we have
\begin{align}
  \int_{\Hexa_n} (\Norm{\vome-\vq_n^*}^2+z)^{\gam} d\ome \sim \int_{\Hexa}(\Norm{\vome}^2+z)^{\gam}
  d\vome=:\MtHexa(2\gam,H).
\end{align}
Here, we centered the Hexagon $\Hexa$ such that the centroids are at the origin $\vq^*=\zero$.  The integral
$\MtHexa(2\gam,H)$ denotes a distorted polar moment of a hexagon with area $H=\mu(\Hexa)=\mu(\Vor^*_n)=A/N$.  More
precisely, the additive distortion $z$ creates a polynomial of different \emph{polar moments of the hexagon}.  Since we
 need to identify the distortion $z$ which achieves equality in \eqref{eq:ncz}, we have to calculate the polar
moments, which determine the polynomial coefficients of
%
%  Hence, asymptotically the condition \eqref{eq:ncz} becomes
%
\begin{align}
  g_{\gam}(z)=   \int_{\Hexa} \frac{2\gam}{\kap} z(\Norm{\vome}^2+z)^{\gam-1}d\vome
  -\int_{\Hexa}(\Norm{\vome}^2+z)^{\gam} d\vome=0.
  \label{eq:xalpall}
\end{align}
\begin{figure}
%\begin{subfigure}[0.48\textwidth]
\vskip-3ex
\subfloat[]{\includegraphics[width=3.5in]{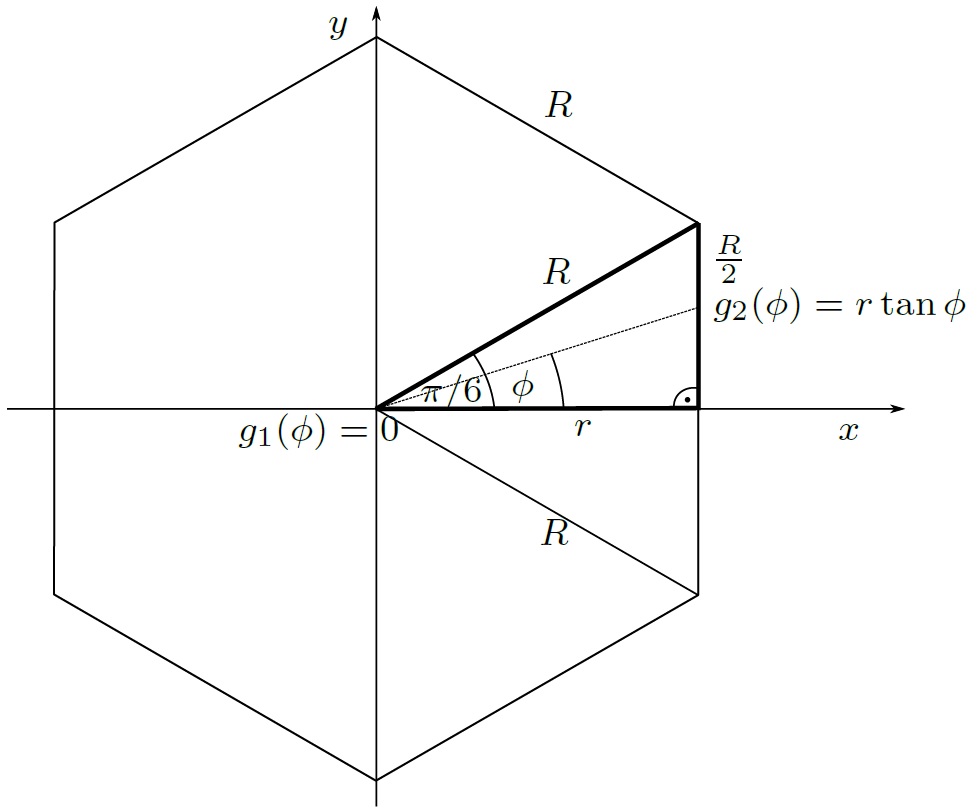}\label{fig:polar}}
\subfloat[]{\includegraphics[width=3in]{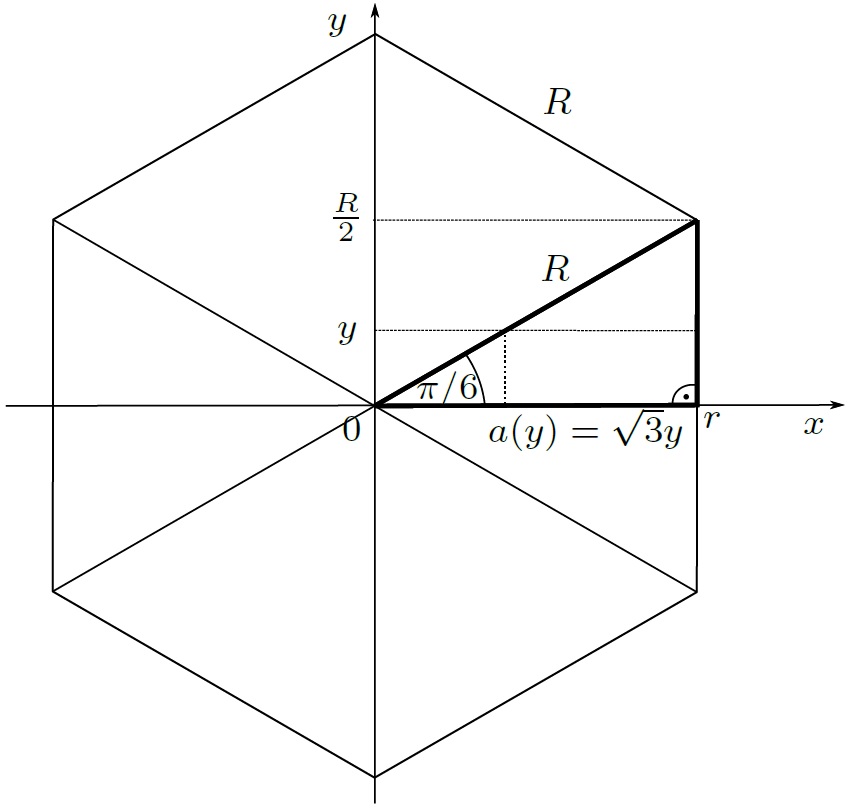}\label{fig:cart}}
\captionsetup{justification=justified}
\vspace{-2ex}
\caption{\small{Hexagon parameterization in right triangles $\Del$ with {\protect\subref{fig:polar}} polar  and
{\protect\subref{fig:cart}} Cartesian coordinates.}}
\label{fig:hexagonparam}
%\vspace{-2ex}
\vskip3ex
\end{figure}
Note that both integrals are strictly positive increasing and continuous functions in $z\geq 0$ for any real-valued
$\gam\geq1$.  Since for $z=0$ the first integral is vanishing and the second one is not, there can exist only one
$z>0$ for which the difference vanishes. Such a $z$ exists, since for $2\gamma/\kap>1$ the first integral increases
faster in $z$ compared with the second one. Hence, there exists only one optimal common height $h^*=\sqrt{z}$.  Furthermore, the
kernel integral only depends on the radius $\Norm{\vome}$ and since the hexagon $\Hexa$ consists of $12$ right-angled
triangles \Del, which are identical up to a rotation around the origin, as shown in \figref{fig:hexagonparam}, we derive the
following  equivalent condition for \eqref{eq:xalpall}:
\begin{align}
  \int_{\Del}\left( \frac{2\gam}{\kap} z(\Norm{\vome}^2+z)^{\gam-1} -(\Norm{\vome}^2+z)^{\gam} \right)d\vome=0.
\label{eq:equationDel}
\end{align}
Hence, we only have to deal with polar moments of the triangle $\Del$. To calculate solutions for specific values of
$\gam$,
we need to explicitly calculate the integrals.  The areas of the hexagon and triangles in terms of the radius
$r$ of the inscribed circle, are respectively,
\begin{align}
  H=\mu(\Hexa)= 12\int_{\Del}d\vome =   6r^2 \tan\frac{\pi}{6} = 2\sqrt{3}r^2  \quad \text{and}\quad
  \mu(\Del)=\frac{H}{12},
  \label{eq:Handt}
\end{align}
see for example \cite[4.5.3]{Zwi03} and  \figref{fig:hexagonparam}.  Note that the length of the edges are the same
as the outer radius $R=\frac{2r}{\sqrt{3}}$.
However, for arbitrary $\gam\geq1$, we need general orders of the moments
%
%\begin{align}
$    \MDelepsH=\int_{\Del} \Norm{\vome}^\eps d\vome.$
%\end{align}
%
Since $\Norm{\vome}^\eps= f(\rho,\phi)=\rho^\eps$, is continuous in the radius $\rho$ and angle $\phi$, we
can parameterize the integral in polar coordinates, as shown in \figref{fig:polar}, to derive \cite[17.5]{Swo83}:
\begin{align}
  \MDelepsH&
  %=\int_{\Del} \Norm{\vome}^\eps d\vome =
  =\int_{0}^{\frac{\pi}{4}}\! \int_{g_1(\phi)}^{g_2(\phi)} \rho^{\eps}\rho d\rho
  d\phi
  =\int_0^{\frac{\pi}{4}} \!\int_0^{r\tan(\phi)} \!\rho^{\eps+1}d\rho d\phi
  =  \int_0^{\frac{\pi}{4}}\! \frac{1}{\eps+2} (r\tan\phi)^{\eps+2}d\phi.
\end{align}
Let us substitute $\tan\phi=t$, which results in $d\phi=dt/(1+\tan^2\phi)$ and hence
\begin{align}
\MDelepsH  &= \frac{1}{\eps+2}\int_{0}^1  \frac{ r^{\eps+2} t^{\eps+2}}{1+t^2} dt = \frac{r^{\eps+2}}{\eps+2} \int_0^1
  \frac{t^{\eps+2}}{1+t^2} dx.
\end{align}
By using \cite[(3.241)]{GR15}, we have
\begin{align}
 \MDelepsH &= \frac{r^{\eps+2}}{2\eps+4}\bet\left(\frac{\eps+3}{2}\right)
  = \left(\frac{H}{2\sqrt{3}}\right)^{\eps+2} \frac{1}{2\eps+4}\bet\left(\frac{\eps+3}{2}\right).
\end{align}
With the integral representation in \cite[8.375(2)]{GR15}, for odd $\eps=2n-1$, we get the expression
\begin{align}
  \bet(n+1)=(-1)^{n} \ln 2 + \sum_{i=1}^{n} \frac{(-1)^{i+n}}{i}\quad,\quad n \in \Nplus.
\end{align}
Unfortunately, a closed form expression for $\bet(n+1/2)$ or $\bet(x)$ is difficult to derive since they are given in
terms of the \emph{Gamma} or \emph{Riemann-Zeta function}. However, we can use the Cartesian parameterization to derive
the even moments for $\eps=0,2,4$.  For $\eps=0$, we get the triangle area
\begin{align}
 \MDel(0,H)= \MtHexa(0,H)/12=\mu(\Del)=\frac{H}{12}. \label{eq:Mzero}
\end{align}
The second, fourth, and sixth polar moments of the triangle is derived in \appref{sec:HexaMoments} as %as
\begin{align}
  \MDel(2,H) & = \frac{5}{18\sqrt{3}}r^4 =\frac{5H^2}{216\sqrt{3}},\label{eq:Mtwo} \quad,\quad
  \MDel(4,H) = \frac{56}{270\sqrt{3}}r^6 = \frac{7}{270\cdot 9}H^3, \\
  \MDel(6,H) & = \frac{166}{35\cdot27\sqrt{3}}r^8 = \frac{83 H^4}{72\cdot 35\cdot27\sqrt{3}} \label{eq:Msix},
\end{align}
where we used \eqref{eq:Handt} for $H$. With \eqref{eq:equationDel}, this yields to the global common height for
$\gam=\kap=1$
\begin{align}
  h^*(1,H)\sim \sqrt{z}=\sqrt{\frac{\kap}{2-\kap}\frac{\MDel(2,H)}{\MDel(0,H)}}
   =c(1)\sqrt{H}, \ c(1)=\sqrt{ \frac{5 }{18 \sqrt{3}}}  \approx \sqrt{0.1603}. \label{eq:h1opt}
\end{align}
For $\gam=2$ and $3\geq \kap\geq1$, we obtain from \eqref{eq:equationDel} a quadratic equation in $z$:
\begin{align}
  g_2(z)=  &\int_{\Del} \left(\frac{4-\kap}{\kap}z^2 + \frac{4-2\kap}{\kap} z\Norm{\vome}^2 -\Norm{\vome}^4 \right) d\vome.
  \label{eq:hexagoncond}
\end{align}
Then, using \eqref{eq:Mzero} and \eqref{eq:Mtwo} in  \eqref{eq:hexagoncond}, we have
\begin{align*}
  g_2(z)  = \frac{(4\!-\!\kap)H}{12\kap} z^2 \!+\! \frac{5(2\!-\!\kap)H^2}{108\sqrt{3}\kap} z - \frac{7H^3}{270\cdot9}
  =0 \quad\LRA\quad 0 = z^2  \!+\! \frac{5H}{9\sqrt{3}}\frac{2\!-\!\kap}{4\!-\!\kap} z - \frac{14 H^2\kap}{405(4\!-\!\kap)},
\end{align*}
which has the following unique positive solution:
\begin{align}
  z&= \sqrt{ \frac{25(2-\kap)^2H^2}{18^2 3(4-\kap)^2}\! +\!\frac{14\kap H^2}{405(4-\kap)}} -\frac{5(2-\kap)H}{18\sqrt{3}(4-\kap)}
  =\frac{\sqrt{\frac{500+172\kap-43\kap^2}{5}} -5(2-\kap)}{18\sqrt{3}(4-\kap)}H
\end{align}
resulting in the optimal common height
\begin{align}
   h^*(2,\kap,H)= c(2,\kap)\sqrt{H}\text{ with }
%  \sqrt{\frac{ \sqrt{629/5}\!-\!5}{54\sqrt{3}}H}\approx \sqrt{0.0664H}.
  \  c(2,\kap)= \sqrt{\frac{\sqrt{ \frac{(172-43\kap)\kap}{5} +100}-10+5\kap}{18\sqrt{3}(4-\kap)}}.
\end{align}
Finally, for $\gam=3$ and $5\geq \kap\geq1$,  we get a cubic equation
\begin{align}
  g_3(z)= z^3 \frac{6-\kap}{\kap}\int_{\Del}
         +z^2\frac{12-3\kap}{\kap}\int_\Del \Norm{\vome}^2
         +z\frac{6-3\kap}{\kap}\int_{\Del}\Norm{\vome}^4
         - \int_{\Del}\Norm{\vome}^6.
\end{align}
Inserting the moments \eqref{eq:Mtwo}-\eqref{eq:Msix}, we obtain the coefficients $a_i$ as
\begin{align}
  0=z^3 + \frac{5H}{6\sqrt{3}}\frac{4-\kap}{6-\kap}  z^2 + \frac{14H^2}{135}\frac{2-\kap}{6-\kap} z
  - \frac{83H^3}{210\cdot 27\sqrt{3}}\frac{\kap}{6-\kap}
  =z^3+a_2 z^2+a_1 z+a_0.\label{eq:cubicz}
\end{align}
%
% wrong statemtent. check kapp=3 and things can be negative
%Note, that $z(z^2+a_2z+a_1)$ is positive and strictly increasing in $z>0$ for $5\geq \kap \geq1$. Since $a_0<0$ there must
%exist exactly one $z>0$ which satisfies \eqref{eq:cubicz}.
%
Then, one solution of \eqref{eq:cubicz} is given by
% \cite[2.3.2]{Zwi03} %old version, is still correct of course
\cite[(3.8.2)]{AS64}
as
\begin{align}
  z_1= s_1^{1/3} + s_2^{1/3} -\frac{a_2}{3}  \quad\text{with}\quad s_1=p +\sqrt{q^3 +p^2}, \ s_2=p-\sqrt{q^3+p^2},
\end{align}
where we have
\begin{align}
  q &=  \frac{1}{3}a_1 -\frac{1}{9}a_2^2 = \frac{43\kap^2 -344\kap +16}{4860(6-\kap)^2}H^2,\\
  p &= \frac{1}{6}a_1 a_2 -\frac{1}{2}a_0 -\frac{1}{27}a_2^3 %= \frac{b^3}{27} - \frac{bc}{6} +\frac{d}{ 2}
  =\frac{-143360 + 16728\kap + 444 \kap^2 - 37 \kap^3}{612360\sqrt{3}(6-\kap)^3}H^3.
\end{align}
Note that the discriminant $q^3+p^2>0$ for $1\leq \kap\leq 5$ and every $H>0$. Therefore, there exists only one real-valued
solution, given by $z_1$ (all third roots are real-valued, $s_2$ can be also negative).
Then, asymptotically, the optimal height for $H$ and $\gam=3$ is
\begin{align}
  h^*(3,\kappa,H)&\sim\sqrt{z_1}
  =\sqrt{\frac{5}{18\sqrt{3}} \frac{ (u(\kap)-v(\kap))^{\frac{1}{3}} + (u(\kap)+v(\kap))^{\frac{1}{3}} -(4-k) }
  {6-\kap}H},
%  \approx\sqrt{0.0465H}
\end{align}
where
\begin{align}
  u(\kap) & = (143360 - 16728\kap -444 \kap^2 + 37\kap^3)/(125\cdot 35), \label{eq:uv}\\
  v(\kap) & =\frac{12(6\!-\!\kap)}{125\cdot 35}\sqrt{\frac{3}{5}}\sqrt{ 6607552\!+\! 659680 \kap \!+\! 103387 \kap^2\!
  -\! 108408\kap^3 \!+\! 9034\kap^4}.
\end{align}

Asymptotically, the minimal average distortion is given by \eqref{eq:phoptlocal} as
\begin{align}
  \AvDis^*(\gam,H)
   \sim \frac{N}{A} \int_{\Hexa} \frac{(\Norm{\vome}^2+(h^*(\gam,H))^2)^{\gam}}{h^*(\gam,H)}d\vome
  =\frac{12}{H} \int_{\Del} \frac{(\Norm{\vome}^2+(h^*(\gam,H))^2)^{\gam}}{h^*(\gam,H)}d\vome.
\end{align}
For $\gam=1$, we get with \eqref{eq:h1opt}, \eqref{eq:Mzero},  and \eqref{eq:Mtwo}:
\begin{align}
  \AvDis^* (1,H) &\sim \frac{12}{H}
 \left[ \sqrt{\frac{18\sqrt{3}}{5H} } \int_{\Del}\Norm{\vome}^2 d\vome+\sqrt{\frac{5H}{18\sqrt{3}}} \int_{\Del} d\vome \right]
%  = \sqrt{\frac{5H}{18\sqrt{3}}}  + \sqrt{\frac{5H}{18\sqrt{3}}}\notag \\
 = \sqrt{10H/(9\sqrt{3})}\approx 0.8 H^{\frac{1}{2}}.
\end{align}
For $\gam=2$, we have
%
% old kppa=1 version
%\begin{align}
%   \AvDis^* (2,\kap,H) &\sim \frac{12}{H}
%   \left[ \frac{1}{h^*(2,\kap,H)} \int_\Del\Norm{\vome}^4 + 2h^*(2,\kap,H) \int_{\Del} \Norm{\vome}^2 + (h^*(2,\kap,H))^3
%   \frac{H}{12}\right]\notag
%\intertext{with \eqref{eq:Mtwo} and \eqref{eq:Mfour} we get for $\kap=1$}
%& =\frac{14}{405} \sqrt{\frac{54\sqrt{3}}{\sqrt{629/5}-5}}H^{3/2}
%+ \sqrt{\frac{\sqrt{629/5}-5}{54\sqrt{3}}} \frac{10 H^{3/2}}{18\sqrt{3}}
%+ \left(\frac{\sqrt{629/5}-5}{54\sqrt{3}}H\right)^{3/2}\notag\\
% &=
% \frac{5\sqrt{2}\left(\sqrt{629/5}+\frac{379}{25}\right)}{243\cdot 3^{1/4}\sqrt{\sqrt{629/5}-5}}
% H^{\frac{3}{2}} \approx 0.234 H^{3/2}
%\end{align}
%
%
\begin{align}
   \AvDis^* (2,\kap,H) &\sim \frac{12}{H}
   \left[ \frac{1}{h^*(2,\kap,H)} \int_\Del\Norm{\vome}^4 + 2h^*(2,\kap,H) \int_{\Del} \Norm{\vome}^2 + (h^*(2,\kap,H))^3
   \frac{H}{12}\right]\notag.
\intertext{Using \eqref{eq:Mtwo}, after some simple calculations, we have}
%&= \frac{12}{H} \frac{1}{c(2,\kap)\sqrt{H}} \frac{7H^3}{270\cdot 9} + \frac{12\cdot 2 c(2,\kap)}{H} \cdot
%\frac{5H^2}{216 \sqrt{3}} + \frac{12}{H}\frac{H}{12} c^3(2,\kap)\notag\\
   \AvDis^* (2,\kap,H) & \sim \left(\frac{14}{405c(2,\kap)}
+ \frac{5c(2,\kap)}{9\sqrt{3}}
+ c^3(2,\kap)\right) H^{\frac{3}{2}}\notag.
\end{align}
And, finally, for $\gam=3$, we calculate
% old verison with fixed kap=1
%And finally, for $\gam=3$ we get for $\kap=1$
%\begin{align*}
%   \AvDis^* (3,H) &\sim \frac{12}{H}
%   \left[ \frac{1}{h^*(3,H)} \int_\Del\Norm{\vome}^6 + 3h^*(3,H) \int_{\Del} \Norm{\vome}^4 + 3(h^*(3,H))^3 \int_{\Del}
%   \Norm{\vome}^2 +  (h^*(3,H))^5 \frac{H}{12}\right]
%   \intertext{with \eqref{eq:Mtwo}, \eqref{eq:Mfour}, \eqref{eq:Msix}, and \eqref{eq:uv} we get for
%   $\tc=\sqrt{(u-v)^{1/3}+(u+v)^{1/3}-5\sqrt{3}}$}
%  &=
%  \left(\frac{83\sqrt{3}}{7\tc} + \frac{14\tc}{5} + \frac{\tc^3}{4\sqrt{3}} + \frac{\tc^5}{300}\right)
%  \frac{H^{5/2}}{81\sqrt{10}} \approx 0.06 H^{5/2}&
%\end{align*}
%%
\begin{align*}
   \AvDis^* (3,H) &\sim \frac{12}{H}
   \left[ \frac{1}{h^*(3,H)} \int_\Del\Norm{\vome}^6\! +\! 3h^*(3,H) \int_{\Del} \Norm{\vome}^4\! +\! 3(h^*(3,H))^3 \int_{\Del}
   \Norm{\vome}^2\! + \! (h^*(3,H))^5 \frac{H}{12}\right]
\intertext{using \eqref{eq:Mtwo}, \eqref{eq:Msix}, and \eqref{eq:uv}, we get for
   $\tc=\sqrt{(u-v)^{1/3}+(u+v)^{1/3}-5\sqrt{3}}$}
%  &=
%  \Bigg(\frac{12}{H}\frac{1}{c(3,\kap)\sqrt{H}} \frac{82 H^4}{72\cdot 35\cdot 27 \sqrt{3}}
%  + \frac{36}{H} c(3,\kap)\sqrt{H} \frac{7H^3}{270\cdot 9}
% \! +\! \frac{24}{H}  c^3(3,\kap)\frac{5H^{\frac{5}{2}}}{216 \sqrt{3}}  \! +\!
%   c^5(3,\kap) \Bigg)\\
    \AvDis^* (3,H) &\sim  \left( \frac{83}{195\cdot 27 c(3,\kap)} + \frac{14c(3,\kap)}{135} +
    \frac{5c^3(3,\kap)}{9\sqrt{3}} + c^5(3,\kap)\right) H^{5/2}.
\end{align*}

\section{Moments of inertia over right-angled Triangles}\label{sec:HexaMoments}

A hexagon  is symmetric around the origin and can be separated in  $6$ equiangular triangles with edge length $R$. 
Hence each triangle can be split in two right triangles $\Del$ with hypotenuses  $R$ and cathetus $a=R/2=r\sqrt{3}$ and
$r$, as shown in \figref{fig:cart}. Moments of order $2n$ are then given by
\begin{align}
  \MDel(2n,H)=  \int_{\Del} \Norm{\vome}^{2n}_2 d\vome = \int_{\Del} (x^2+y^2)^n dx dy.
\end{align}
We rotate the right triangle such that its longer cathetus $r$ lies on the positive $x$ axis. Integrating $y$ from $0$ to
$a=R/2=r/\sqrt{3}$, we need to adjust the lower integral bound for $x$  by the triangle as $\tan(\pi/6)=y/a(y)$
and hence by $a(y)=\sqrt{3}y$. The second moment is
\begin{align}
  \MDel(2)&= \int_{0}^{\frac{r}{\sqrt{3}}} \int_{a(y)}^{r} (x^2 +y^2) dx dy
= \int_{0}^{\frac{r}{\sqrt{3}}} \left[\frac{1}{3} x^3+y^2 x \right]_{\sqrt{3}y}^{r}  dy
%= \int_{0}^{r/\sqrt{3}} \frac{1}{3} r^3+y^2 r -2\sqrt{3}y^3 dy\notag\\
%&=  \left[\frac{1}{3} r^3y+ \frac{1}{3}y^3 r -\frac{\sqrt{3}}{2}y^4 \right]_{0}^{r/\sqrt{3}}
%=  \frac{r^4}{3\sqrt{3}} + \frac{r^4}{9\sqrt{3}}- \frac{r^4}{6\sqrt{3}}
=\frac{5r^4}{18\sqrt{3}}.
\end{align}
Similarly, for the fourth  moment, $n=2$, using $a(y)=\sqrt{3}y$ and after some calculations, we have 
\begin{align}
  \MDel(4) %  &= \int_{0}^{r/\sqrt{3}} \int_{a(y)}^{r} (x^4+y^4 +2x^2y^2) dx dy
%  = \int_{0}^{r/\sqrt{3}} \left[\frac{1}{5} x^5+y^4 x +\frac{2}{3} x^3y^2 \right]_{\sqrt{3}y}^{r}  dy\notag\\
  &= \int_{0}^{r/\sqrt{3}} \frac{1}{5}(r^5-9\sqrt{3}y^5)+y^4 (r-\sqrt{3}y) +\frac{2}{3}y^2(r^3-3\sqrt{3}y^3)  dy,\notag\\
%  &=  \left(\frac{r^6}{5\sqrt{3}}- \frac{r^6}{30\sqrt{3}}+\frac{r^6}{5\cdot 3^{5/2}}
%  - \frac{r^6}{6\cdot 3^{5/2}}  +\frac{2 r^6}{27\sqrt{3}}- \frac{r^6}{27\sqrt{3}}  \right) \notag\\
 \MDel(4) 
% &=  \left[\frac{1}{5}r^5y- \frac{9\sqrt{3}}{30}y^6+\frac{1}{5}y^5r- \frac{\sqrt{3}}{6}y^6
% +\frac{2}{9} y^3r^3- \frac{6\sqrt{3}}{3\cdot 6} y^6\right]_{0}^{r/\sqrt{3}} \notag\\
 &=  \frac{5r^6}{30\sqrt{3}} +\frac{r^6}{30\cdot 9 \sqrt{3}}
    +\frac{r^6}{27\sqrt{3}} =  \frac{46r^6}{10\cdot 27\sqrt{3}}    +\frac{10r^6}{10\cdot 27\sqrt{3}}
    =  \frac{28}{5\cdot 27\sqrt{3}}r^6. %=\frac{56}{270\cdot 8\cdot 9} = \frac{7}{270\cdot 9}
\end{align}
And, finally, for $n=3$, the sixth moment can be calculated as 
\begin{align}
  \MDel(6) &=  \int_{0}^{\frac{r}{\sqrt{3}}} \!\int_{\sqrt{3}y}^{r} (x^6+3x^4y^2 +3x^2y^4 + y^6) dx dy
%  = \int_0^{\frac{r}{\sqrt{3}}} \left[ \frac{1}{7} x^7 +\frac{3}{5} x^5 y^2 +x^3 y^4 +xy^6\right]_{\sqrt{3}y}^{r} dy\notag\\
%  &= \int_{0}^{\frac{r}{\sqrt{3}}}
%  \left(\frac{1}{7} r^7 + \frac{3}{5} r^5y^2 +r^3y^4 +ry^6-(\frac{27}{7}+\frac{27}{5}+1+3)\sqrt{3}y^7\right)\notag\\
%  &= \left[\frac{1}{7}r^7y  +\frac{1}{5}r^5y^3 + \frac{1}{5}r^3y^5 + \frac{1}{7}y^7r 
%  - \frac{58}{35}\sqrt{3}y^8\right]_{0}^{\frac{r}{\sqrt{3}}}
  %= \frac{5\cdot27+5+7\cdot 9+ 7\cdot 3 -48}{35\cdot27\cdot\sqrt{3}}r^8 
  = \frac{166}{35\cdot27\sqrt{3}}r^8. 
\end{align}

\appendices

\end{document}